\documentclass[a4paper, amsfonts, amssymb, amsmath, reprint, showkeys, twoside,superscriptaddress]{revtex4-2}
\usepackage[english]{babel}
\usepackage[utf8]{inputenc}
\usepackage[colorinlistoftodos, color=green!40, prependcaption]{todonotes}
\usepackage{subcaption}
\usepackage{ragged2e}

\usepackage[pdftex, pdftitle={Article}, pdfauthor={Author}, pdfborder={0 0 0}]{hyperref} \hypersetup{colorlinks=true, linkcolor=red, citecolor=green, urlcolor=black}
\usepackage{physics}
\usepackage{bbm}
\usepackage{zref-clever}
\usepackage{amsthm}

\newtheorem{lemma}{Lemma}

\zcsetup{font=\color{blue}}
\zcsetup{cap}
\let\oldzcref\zcref
\renewcommand{\zcref}[2][]{%
  \begingroup
    \hypersetup{linkcolor=blue}
    \oldzcref[#1]{#2}%
  \endgroup
}

\newcommand{\mc}[1]{\mathcal{#1}}
\newcommand{\id}{\mathbbm{1}}

\newcommand{\pos}{\mathrm{pos}}
\newcommand{\sgn}{\mathrm{sgn}}
\newcommand{\ee}{\mathrm{e}}
\newcommand{\ii}{\mathrm{i}}
\newcommand{\zz}{\mathbf{z}}
\newcommand{\ZZ}{\mathbf{Z}}
\renewcommand{\dd}{\mathrm{d}}

\newcommand{\PB}[2]{\{#1, #2\}_P}
\newcommand{\MB}[2]{\{#1, #2\}_M}
\renewcommand{\tr}{\operatorname{tr}}

\begin{document}
\title{Dynamics-based nonclassicality witness under dissipative dynamics}
\author{Nguyen Vu Khoi Huynh}\email[Correspondence email address: ]{nguyen.huynh@u.nus.edu}
\affiliation{Centre for Quantum Technologies, National University of Singapore, 3 Science Drive 2, Singapore 117543}
    
\author{Nicky Nel Narido Labayna}
\affiliation{Centre for Quantum Technologies, National University of Singapore, 3 Science Drive 2, Singapore 117543}

\author{Martine Schut}
\affiliation{Centre for Quantum Technologies, National University of Singapore, 3 Science Drive 2, Singapore 117543}

\author{Valerio Scarani}
\affiliation{Centre for Quantum Technologies, National University of Singapore, 3 Science Drive 2, Singapore 117543}
\affiliation{Department of Physics, National University of Singapore, 2 Science Drive 3, Singapore 117542}
\date{\today}

\begin{abstract}
A dynamics-based test, originally proposed by Tsirelson for the harmonic oscillator, provides a method for certifying quantumness under the assumption of a known Hamiltonian. These tests, however, are typically proposed for isolated systems, an assumption that breaks down in experimental implementation. In this paper, we extend the protocol to the harmonic oscillator coupled to standard models of dissipation: thermal relaxation, pure dephasing, and the Caldeira--Leggett model. Using the Moyal-Wigner formalism of quantum mechanics in phase space, we show that the introduction of dissipation requires a dissipation-dependent shift of the classical bound, and provide the threshold under which the nonclassicality witness retains its validity.
\end{abstract}

\maketitle

\section{Introduction}
Certifying nonclassicality in continuous-variable (CV) systems is a fundamental task in quantum science. An indicator of non-Gaussian quantumness is the presence of negative values in the Wigner quasiprobability distribution, which is both a necessary resource for quantum advantage in CV quantum computation~\cite{mari2012} and a central object in the resource theories of non-Gaussianity~\cite{albarelli2018}.
While full tomography can be used to reconstruct the Wigner function, it demands significant resources. 
Certifying Wigner negativity with little resources and simple measurements is thus a relevant and active area \cite{chabaud2021, zaw2024}.
One approach is the dynamics-based protocol introduced by Tsirelson~\cite{tsirelson2006} to assess the non-classical nature of harmonic oscillator, and later developed in \cite{zaw2022} for general uniformly precessing observables, by examining the positivity of its coordinate during its evolution in time and assigning to it a score. 
It is established that any state with a positive Wigner function satisfies a geometric bound and any violation certifies Wigner negativity. 
The appeal of this protocol is that only a single binary measurement per round is required. 
This technique has been adapted beyond simple harmonic motion to other time-independent Hamiltonians~\cite{zaw2023, zaw2022}, entanglement witnessing \cite{jayachandran2023, huynh-vu2024},  and was recently demonstrated in a nuclear spin qudit \cite{vaartjes2025}. It also attracts related foundational questions \cite{garg2026, plavala2024, zaw2025}.
These protocols, however, are defined for ideal, isolated systems.
In any experimental setting, the system will couple to an environment, leading to dissipation and decoherence. 
This environmental interaction leads to the emergence of classicality and smears the Wigner function, degrading the negativity that enables the protocol while also affecting the classical dynamics it benchmarks against. 
This raises two questions: (i) What is the valid classical bound under noisy dynamics? (ii) Up to what noise level can quantum states still violate it?

We address both questions for the harmonic oscillator under three channels spanning distinct decoherence mechanisms in CV quantum hardware: phase noise, energy exchange (or amplitude damping), and force-induced diffusion.
Pure dephasing, thermal relaxation (photon gain/loss), and quantum Brownian motion are the cleanest representatives of these three mechanisms, respectively, and are analytically tractable.
Many realistic environments reduce to one of these in an appropriate limit, photon gain/loss give the heating/cooling rate due to, for example, electric-field noise in ion traps~\cite{brownnutt_ion-trap_2015} or a thermal bath in optomechanics~\cite{Nunnenkamp2011optomechanics}.
Phase noise is an energy-conserving frequency noise which can originate, for example, from voltage noise in surface ion traps~\cite{brownnutt_ion-trap_2015} or charge noise in circuit QED~\cite{haroche2006exploring,LarsoncQED}.
Quantum Brownian motion~\cite{GrabertQBM1983} couples linearly to position and can be cause by stochastic fluctuations in linear forces, such as from collisions with air molecules~\cite{schlosshauer2019quantum} or electric field noise in surface ion traps~\cite{brownnutt_ion-trap_2015}.

In this work, we consider these three noise sources and give, for each of them, (1) an updated classical bound and (2) the noise range over which a quantum violation remains achievable. 
We consider two notions of classicality: that of a proper phase-space distribution, and that of a Wigner positive state.
We will show that for both these notions of classicality, the classical bound changes as a function of the noise strength, showing that knowing the expected noise is necessary for these types of dynamics.
Both the protocol and the noise sources are kept generic: the noise strength is taken as a free parameter, and we consider the case of a harmonic oscillator with three equally spaced measurement times for building the nonclassicality witness. In fact, both the number of measurements and the spacing can be varied~\cite{zaw2023}.
The key point of this work is the integration of generic noises in the dynamical protocol and the resulting change of the nonclassical bounds; for application to a specific experimental setup, one can specialise the protocol and define the relevant noise sources accordingly.
A brief introduction of the precession protocol~\cite{tsirelson2006} is given in ~\zcref{sec:protocol}.
~\zcref[S]{sec:photon-fluctuation,sec:fpe_cl,sec:fpe_deph}, then consider case-by-case the three different dephasing mechanisms (thermal relaxation, quantum Brownian motion and pure dephasing, respectively) and find for each case the classical bound and maximum quantum score. 
Comments on the methodology and the certifiable region/robustness of the optimal state and accumulated results are discussed in~\zcref{sec:results}.

\section{General precession protocol}\label{sec:protocol}
Consider a harmonic oscillator of frequency $\omega_0$. The protocol proceeds as follows~\cite{zaw2022}
\begin{enumerate}
    \item Prepare the same initial state of the oscillator independently in every round.
    \item In each round, sample $k\in \{0,1,2\}$ uniformly, allow the oscillator to evolve under its dynamics until $t_k = kT_0/3$ ($T_0 = 2\pi/\omega_0$) and measure the sign of its position quadrature $q$. Assign outcome $1$ if $q>0$ and $0$ if $q\leq 0$. 
    \item Repeat over many rounds. The protocol score is the average score over multiple rounds.
\end{enumerate}

In what follows, we derive the classical and quantum scores in their respective formalisms and bound their maximum over admissible states. The explicit values for each noise model are computed in~\zcref{sec:photon-fluctuation,sec:fpe_cl,sec:fpe_deph}, where the quantum and classical score functional (denoted $\mathcal{S}$) are maximized to find a bound (denoted $\mathbf{P}$) and compared to the protocol score.

\subsection{Classical score \& dissipative dynamics}
In the original protocol stated for closed systems, the dynamical assumption is a statement about observables---uniform precession of the measured quadrature \cite{zaw2022, zaw2025c}---but under noise, the sign statistics depend on the full evolving distribution, including the diffusion induced by the environment and not only on the precession of the mean alone. We establish the correspondence by fixing the assumption at the level of the phase space density, where the object common to the two theories is the equation of motion in phase space \cite{hernandez2024}. A classical description of the protocol is thus defined by two assumptions
\begin{itemize}
    \item (A1) The oscillator's state is described by a non-negative phase space distribution $f(\zz, 0)$ according to classical mechanics. 
    \item (A2) The dynamics of the system is known. Between preparation and measurement, the state $\zz(t)$ is  a realisation of the classical stochastic process defined by the assumed dynamics of the oscillator: uniform precession at $\omega_0$ when isolated \cite{zaw2022, tsirelson2006} and, for the open channels studied here,  the phase space diffusion with the drift and diffusion of the noise channel, under which the density $f(\zz, t)$ obeys the corresponding Fokker–Planck equation (FPE) (\zcref{app:greens}). 
\end{itemize}

Under these two assumptions, the classical prediction for the protocol score is a linear functional of the initial distribution 
\begin{align}\label{eq:phase_space_score_cl}
    \mc{S}[f] = \frac{1}{3}\sum_{k=0}^{2}\int f(\zz,t_k)\,\Theta(q)\,\dd \zz
\end{align}
where $\Theta$ is the Heaviside step function. 

Given a quantum mechanical model, it remains to find the corresponding classical dynamics assigned by (A2). In the noiseless protocol, the dynamical assumption is common between the two theories: a classical oscillator precesses uniformly, and the quadrature operators in the Heisenberg picture obey the same linear equations of motion as their classical counterpart.
Dissipation preserves this agreement for the channels considered here. In each channel, the Moyal expansion of the Wigner-Weyl image of the equation of motion terminates at second order in phase space derivatives, so the Wigner function obeys a Fokker-Planck equation exactly, with no omitted higher-order $\hbar$ terms (\zcref{app:greens}). 
For general Lindbladians, the quantum-classical correspondence is only approximate, with errors controlled by the diffusion strength \cite{hernandez2024}.
That Fokker-Planck equation is at the same time the standard classical stochastic description of the corresponding environment: damped precession with thermal diffusion, Klein-Kramers dynamics of classical Brownian motion, and action-conserving phase diffusion. 
Assumption (A2) adopts this common dynamics. 
The two theories then share a common dynamics in phase space, as they share uniform precession in the noiseless protocol, and differ only in the states they admit: non-negative densities under (A1) in classical theory, against Wigner quasiprobability distribution, which may take negative values, in quantum theory. 
A measured score above the classical bound, therefore, falsifies the conjunction of (A1) and (A2). 

\subsection{Classical Dirac-delta bound}
Both the unitary and dissipative dynamics considered in this paper are linear in $f$ and preserve non-negativity, so the evolution can be written through a non-negative Green's function $G(\mathbf{z},t\mid \mathbf{z}_0)$:
\begin{equation}\label{eq:green_general}
    f(\mathbf{z},t) = \int G(\mathbf{z}, t \mid \mathbf{z}_0)\, f_0(\mathbf{z}_0)\, \dd\mathbf{z}_0 .
\end{equation}
Substituting~\eqref{eq:green_general} into~\eqref{eq:phase_space_score_cl} expresses the score as the expectation of the score function $s(\zz_0)$,
\begin{equation}\label{eq:P3_convex}
    \mc{S}[f] = \int f(\mathbf{z}_0)\, S(\zz_0)\, \dd\mathbf{z}_0,
\end{equation}
where
\begin{equation}\label{eq:P3_dirac}
    S(\zz_0) :=\mc{S}[\delta_{\mathbf{z}_0}] = \frac{1}{3}\sum_{k=0}^{2}\int \Theta(q)\, G(\mathbf{z}, t_k \mid \mathbf{z}_0)\, \dd\mathbf{z}
\end{equation}
is the score of a Dirac initial condition at $\mathbf{z}_0$. 

Since $\mc{S}$ is linear in the initial distribution, its supremum over normalised non-negative distributions is attained at an extreme point: a Dirac delta,
\begin{equation}\label{eq:Pcl_max}
    \mathbf{P}_\text{cl} = \sup_{\mathbf{z}_0 \in \mathbb{R}^2}\,  S(\zz_0)
\end{equation}
Finding the classical bound on the protocol score, $\mathbf{P}_\text{cl}$, reduces to the optimization over Dirac initial conditions, which we evaluate explicitly for each channel using the corresponding Green's function. Any classical state then has a score at most the bound $\mathbf{P}_\text{cl}$ defined in~\eqref{eq:Pcl_max}.

The bound $\mathbf{P}_\text{cl}$ puts the least constrictions on the meaning of \textit{classicality}. The broadest classical description assigns to the oscillator a non-negative phase space density $f\ge 0$ evolving under the assumed dissipative dynamics, with no further restriction. The largest score this classical theory permits is the Dirac-delta bound of~\eqref{eq:Pcl_max}, obtained from a point-particle evolution. This bound is dictated by classical statistical mechanics alone, so a measured $\mc{S} > \mathbf{P}_\text{cl}(\gamma)$ cannot be reproduced by any classical description of the dynamics and certifies nonclassicality; it is the strictest reading of the protocol.

\subsection{Wigner positivity bound}
As much as they provide the most stringent criterion, Dirac deltas in phase space are unphysical, as they violate the uncertainty relations. Different thresholds than the Dirac-delta bound can be introduced to certify other properties of the state, such as non-Gaussianity\footnotemark[1]\label{footnote:non_gaussian}\footnotetext[1]{The threshold for detecting non-Gaussianity, $\mathbf{P}_{\rm ng}$, is not given in this work but can be found by optimizing the score over all Gaussian states, i.e., mixture of displaced squeezed vacuum. By convexity, it is achieved at a pure state, which are fully characterized by $2$ complex parameters. The optimization problem can be set up similarly to the Dirac case by exploiting the Gaussianity of the dynamics in the thermal models (see, e.g.,~\eqref{eq:Pcl_photon}), which reduces to a search over the $4$-dimensional parameter space. Every Gaussian state is Wigner-positive, so its score cannot exceed the Wigner-positive supremum; this threshold therefore fits into the hierarchy $\mathbf{P}_{\rm ng} \leq \mathbf{P}_{\rm wp} \leq  \mathbf{P}_{\rm cl}$.\label{footnote:non_gaussian}} and Wigner negativity.

Wigner negativity is specifically a quantum notion. The Wigner function
$W_\rho$ is a complete description of the state with the correct quadrature marginals and inherits constraints from the underlying density operator (e.g., Heisenberg uncertainty). Its negativity is a signature of nonclassicality inaccessible to any classical phase space distribution $f \geq 0$ and is the nonclassical feature of interest here. The sharp threshold for certification of Wigner negativity is the exact Wigner-positive supremum:
\begin{align}\label{eq:def_P+}
    \mathbf{P}_{\rm wp, e} := \sup_{W_\rho \geq 0} \mc{S}[W_\rho]\,,
\end{align}
the largest score attainable by any physical state with a positive Wigner function. 

Any state with $W_\rho \geq 0$ everywhere is simultaneously a valid classical distribution, so its score cannot exceed $\mathbf{P}_\text{cl}$ and $\mathbf{P}_{\rm wp, e}\leq \mathbf{P}_{\rm cl}$. Therefore, an observed score $\mc{S} > \mathbf{P}_\text{cl}$ certifies Wigner negativity. 
For the noiseless harmonic oscillator, a geometric argument gives $\mathbf{P}_\text{cl} = 2/3$~\cite{tsirelson2006, zaw2022}. This bound is also asymptotically achievable by a coherent state with arbitrarily large amplitude, so the gap between the two bounds is closed: $\mathbf{P}_{\rm cl} = \mathbf{P}_{\rm wp,e} = 2/3$ in the noiseless case. 
Under dissipative dynamics, their equality is no longer guaranteed and each witnesses a different notion of nonclassicality: $\mathbf{P}_{\rm cl}$ is the threshold an experiment must beat to rule out classical theory outright, $\mathbf{P}_{\rm wp,e}$ is the threshold it must beat to witness negativity. We characterize $\mathbf{P}_{\rm cl}$ and provide an upper bound for $\mathbf{P}_{\rm wp, e}$ for each dissipation model in the following sections.

The bound $\mathbf{P}_{\rm wp,e}$ corresponds to the tightest witness for Wigner negativity. However, computing it is hard: it is a linear functional maximized over the set of Wigner-positive states, which is not well characterized \cite{vanherstraeten2025}.  A similar optimization problem to \eqref{eq:def_P+} was considered in \cite{chabaud2021} through SDP relaxation for diagonal observables in the Fock basis, a condition that fails for our score observable. We instead consider tractable relaxations obtained by discarding constraints on the Wigner function and maximizing the score over the enlarged feasible set. An extreme choice is to discard all quantum constraints and keep only the positivity and normalization conditions, yielding the set of all classical phase space distributions and the bound $\mathbf{P}_\text{cl}$ considered above. This bound, however, is achieved at a Dirac distribution, which is maximally localized and highly unphysical. This leads to an unnecessarily strict bound for witnessing Wigner negativity. Keeping an additional uniform cap: 
\begin{align}\label{eq:wigner_cap}
    |W(q,p)|\leq \frac{1}{\pi \hbar}\,,
\end{align}
that every Wigner function obeys~\cite{curtright2013} tightens it considerably:
\begin{align}\label{eq:bathtub bound}
    \mathbf{P}_{\rm wp,e} \leq \mathbf{P}_{\rm wp} := \max_{0\leq f \leq \frac{1}{\pi \hbar}}\mc{S}[f] \leq \mathbf{P}_{\rm cl} 
\end{align} 
From \eqref{eq:P3_convex}, the score functional is linear in the initial distribution. Therefore, maximizing it over the probability distributions respecting the cap in \eqref{eq:bathtub bound} can be done via the bathtub principle \cite{lieb2001}: the optimal distribution saturates the cap on the highest-scoring region and zero elsewhere,
\begin{align}\label{eq:bathtub_optimizer}
    f^\star := \frac{1}{\pi \hbar}\bqty{\mathbbm{1}_{\{S>\lambda^\star\}} + c\mathbbm{1}_{\{S=\lambda^\star\}}} \,.
\end{align}
where $\lambda^\star$ is the level satisfying $|\{S>\lambda^\star\}|\leq \pi \hbar \leq |\{S\geq\lambda^\star\}|$ for $S$ in~\eqref{eq:P3_dirac}, and $c\in [0,1]$ is chosen to ensure normalization. For the two thermal models treated below, the score has no plateau, i.e., the level set $\{s=\lambda^\star\}$ has measure zero, and the second term in \eqref{eq:bathtub_optimizer} vanishes. \newline

\begin{figure*}[htp!]
    \centering\includegraphics[width=\linewidth]{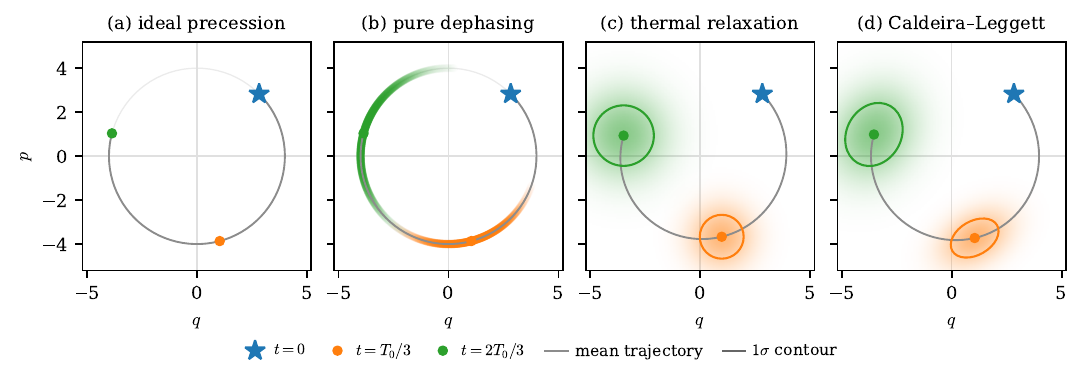}
    \caption{\justifying Phase space evolution of an initial Dirac. The point/star indicates the three equally spaced measurement points. 
    The figure gives an intuitive understanding of how, for each of the three noise models, the dissipation affects the probability of finding the particle in $\{x>0\}$, showing that the classical score can change under dissipative dynamics.
    For pure dephasing (\zcref{sec:fpe_deph}) there is angular diffusion, for thermal relaxation (\zcref{sec:photon-fluctuation}) and in the Caldeira-Leggett model (\zcref{sec:fpe_cl}) the trajectory in-spirals and the state's variance increases until it thermalizes.}
    \label{fig:phase_space_evolution}
\end{figure*}

The two computable upper bounds $\mathbf{P}_\text{cl}$ and $\mathbf{P}_\text{wp}$ behave qualitatively differently at weak noise. 
The Dirac optimizer is the infinitely localized point mass, whose score is fixed by a variance ratio alone and carries no absolute phase space scale; this makes $\mathbf{P}_{\rm cl}(\gamma)$ jump discontinuously above $2/3$ as $\gamma\to 0^+$ in both thermal channels. The cap \eqref{eq:wigner_cap} excludes the point mass and restores the missing scale, allowing $\mathbf{P}_{\rm wp}$ to vary continuously from the noiseless bound $2/3$ in the same limit. Pure dephasing behaves differently from both thermal channels. Its score function is independent of the radius, so a coherent state at the optimal angle achieves a score arbitrarily close to the classical bound in the limit of large radius (\zcref{sec:dephasing_classical}), and the two bounds coincide.

\subsection{Quantum score and bound}\label{sec:quantum_bound}
In the standard formalism, a state is a density operator $\rho$ and the score reads
\begin{equation}\label{eq:Pqm_op}
    \mc{S}[\rho] = \frac{1}{3}\sum_{k=0}^{2}\tr\bqty{\rho(t_k)\,\pos(\hat q)} = \tr\bqty{\rho\, \hat S},
\end{equation}
where the score operator
\begin{equation}\label{eq:Q_def}
    \hat S = \frac{1}{3}\sum_{k=0}^{2} \pos(\hat q; t_k),
    \qquad
    \pos(\hat q; t) := \ee^{\mc{L}^\dag t}\bqty{\pos(\hat q)},
\end{equation}
is the equally-weighted average of the Heisenberg-evolved positivity projector $\pos(\hat q) = \frac{1}{2}(\id + \sgn(\hat q))$ under the adjoint evolution generator $\mc{L}^\dag$. It is the quantisation of $S$ in~\eqref{eq:P3_dirac}. The maximum quantum score is the largest eigenvalue of $\hat S$,
\begin{equation}\label{eq:Pqm_max}
    \mathbf{P}_\text{qm} = \max_\rho \tr(\rho \hat S) = \lambda_{\max}(\hat S),
\end{equation}
attained by the corresponding eigenstate. 
In~\zcref{sec:photon-fluctuation,sec:fpe_deph,sec:fpe_cl} $\pos(\hat q;t)$ is determined for each noise model (i.e. for specific $\mathcal{L}^\dagger$), after which $\hat S$ is diagonalised numerically to find the maximum possible quantum score.

It is illustrative to rewrite the quantum score in the Moyal--Wigner formalism:
\begin{align}\label{eq:phase_space_score_qm}
    \mc{S}[\rho] &= \frac{1}{3}\sum_{k=0}^{2}\int W_\rho(q,p,t_k)\, \Theta(q)\, \dd q\, \dd p = \mc{S}[W_\rho]
\end{align}
In this formalism, the quantum score takes the same form as~\eqref{eq:phase_space_score_cl} with $f \to W_\rho$.
While $f$ is a non-negative phase space distribution, $W_\rho$ is a quasi-probability which can contain negative values, allowing for a violation of the classical score. 

The closed-system threshold $2/3$ assumes only uniform precession and no further model of the dynamics enters. Under noise, the dynamical assumption (A2), by contrast, is an assumption of both the underlying dissipative model and its parameters. An illustration of the phase space evolution under the different noise models is given in~\zcref{fig:phase_space_evolution}. Within the weak-noise windows studied below, the classical thresholds increase monotonically with the noise scale, so a conservative experiment may calibrate an upper bound on the noise and get an upper bound for the certification threshold. A violation then excludes every classical distribution evolving under any noise strength within the calibrated range. 

The three quantities introduced above---the maximum quantum score $\mathbf{P}_\text{qm}(\gamma)$ and the two computable bounds $\mathbf{P}_\text{cl}(\gamma)$ and $\mathbf{P}_\text{wp}(\gamma)$---are all functions of the noise strength $\gamma$ in the respective model. We call a noise strength certifiable when there exists at least one quantum state that violates the bound. In that case, an optimally prepared state achieves a score that no classical distribution (for $\mathbf{P}_{\rm cl}$) or Wigner-positive state (for $\mathbf{P}_{\rm wp}$) can reproduce. The set of all certifiable $\gamma$ is the certifiable region. In the regime we consider, where the maximum quantum score $\mathbf{P}_\text{qm}$ decays and the bounds rise with noise, the region is an interval $[0,\gamma^\star)$ whose upper edge is the crossing rate $\gamma^\star$, defined by 
\begin{align}\label{eq:gamma_star}
    \mathbf{P}_\text{qm}(\gamma^\star) = \mathbf{P}(\gamma^\star);
\end{align}
Each channel therefore carries two crossing rates, $\gamma^\star(\mathbf{P}_\text{cl})$ and $\gamma^\star(\mathbf{P}_\text{wp})$, bounding the certifiable regions for nonclassicality and for Wigner negativity respectively; because $\mathbf{P}_\text{wp} \le \mathbf{P}_\text{cl}$, the latter region always contains the former. We report the certifiable regions for each thermal model in their respective sections. In the pure dephasing case, we will later show in \zcref{sec:fpe_deph} that there is always a quantum state that achieves $\mathbf{P}_{\rm cl}$ at least asymptotically.

\section{Case 1: Thermal relaxation}\label{sec:photon-fluctuation}

Amplitude damping by a thermal bath at mean thermal occupation number $\bar n$ is the standard model of energy exchange in optical cavities, motional modes, and microwave cavities~\cite{breuer2002}. 
Under the Born--Markov approximation and rotating-wave approximation, the reduced dynamics are described by the Lindblad master equation~\cite{ carmichael1998}
\begin{equation}\label{eq:lindblad_photon}
    \dot{\hat\rho} = \mc{L}[\hat{\rho}] :=  -\frac{\ii}{\hbar}[\hat H, \hat\rho]
        + \gamma_0(\bar n + 1)\,\mathcal{D}[\hat a]\hat\rho
        + \gamma_0\bar n\,\mathcal{D}[\hat a^\dag]\hat\rho,
\end{equation}
with $\hat H = \hbar\omega \hat a^\dagger \hat a, ~\omega:=(1+\delta)\omega_0$~\footnote[2]{Under thermal relaxation, the frequency shift is often negligible in practice~\cite{carmichael1998}.} is the effective Hamiltonian after taking into account the Lamb shift $\delta \omega_0$. The following analysis assumes zero Lamb shift, see~\eqref{eq:lindblad_photon}; robustness under misspecified frequency shift is discussed in~\zcref{app:freq_shift}. The dissipator $\mathcal{D}[\hat L]\hat\rho := \hat L\hat\rho\hat L^\dag - \tfrac{1}{2}\{\hat L^\dag\hat L, \hat\rho\}$ has Lindblad jump operators $\hat{L}_- = \sqrt{\gamma_0(\bar{n}+1)}\hat{a}$ and $\hat{L}_+ = \sqrt{\gamma_0\bar{n}}\hat{a}^\dag$. 
Since both operators have linear Weyl symbols (i.e., their Wigner--Weyl corresponding phase space functions), the Wigner--Weyl mapping of the dissipator terminates exactly at second order in derivatives~\cite{dubois2021,hernandez2024}. The master equation~\eqref{eq:lindblad_photon} therefore reduces to a Fokker--Planck equation of an Ornstein–Uhlenbeck (OU) process~\cite{risken2012, carmichael1998} (see also~\zcref{app:greens}):
\begin{equation}\label{eq:fpe}
\frac{\partial W}{\partial t} = -\nabla\cdot(\mathbf{A}\mathbf{z}\,W) + \nabla \cdot \pqty{\boldsymbol{D} \nabla W} ,
\end{equation}
with drift matrix
\begin{equation}\label{eq:drift_photon}
\mathbf{A} = \begin{pmatrix} - \frac{\gamma_0}{2} & 1/m   \\[3pt] -m\omega_0^2 & - \frac{\gamma_0}{2} \end{pmatrix},
\end{equation}
and diffusion matrix
\begin{equation}\label{eq:D0}
\boldsymbol{D} = \frac{\hbar\gamma_0(2\bar{n}+1)}{4} \begin{pmatrix}  1/m\omega_0 &0  \\[3pt] 0 &  m\omega_0 \end{pmatrix} \, .
\end{equation}
In the semiclassical limit $\hbar \to 0$, the mean photon number scales as $\bar{n} \approx k_B T/\hbar \omega_0$, dominating the vacuum fluctuation. The diffusion term then becomes
\begin{align*}
    \boldsymbol{D} \approx \frac{\gamma_0}{2\omega_0} k_B T\begin{pmatrix}  1/m\omega_0 &0  \\[3pt] 0 &  m\omega_0 \end{pmatrix}  \,.
\end{align*}
The diffusion does not vanish in this limit but converges to classical diffusion.

Since the drift and diffusion are constant, the dynamics is Gaussian-preserving and the corresponding stochastic trajectory has a Green's function that is a Gaussian:
\begin{equation}\label{eq:green_photon}
    G(\mathbf z, t \mid \mathbf z_0) = \frac{\exp\bqty{-\tfrac{1}{2}\pqty{\mathbf z - \bar{\mathbf z}(t)}^T \mathbf C(t)^{-1}\pqty{\mathbf z - \bar{\mathbf z}(t)}}}{2\pi\sqrt{\det\mathbf C(t)}},
\end{equation}
with spiraling center
\begin{equation}\label{eq:spiral_centre}
    \bar{\mathbf z}(t) = \ee^{-\gamma_0 t/2}\,R(\omega_0 t)\,\mathbf z_0 =: (\bar q(t), \bar p(t))
\end{equation}
where
\begin{align*}
    \qquad R(\phi) = \begin{pmatrix}\cos(\phi) & \sin(\phi)/(m\omega_0)\\ -m\omega_0\sin(\phi) & \cos(\phi)\end{pmatrix},
\end{align*}
and variance
\begin{equation}\label{eq:variance_photon}
    \mathbf C(t) = \frac{\hbar(2\bar n + 1)}{2}\bigl(1 - \ee^{-\gamma_0 t}\bigr)\,
    \begin{pmatrix} 1/(m\omega_0) & 0 \\[3pt] 0 & m\omega_0 \end{pmatrix}.
\end{equation}
The detailed derivation is given in~\zcref{app:greens}.
The variance grows from zero towards the thermal value, showing thermalization of the state under thermal relaxations.
The centre of the Gaussian shows a deterministic spiral, which is also shown in the phase space evolution in~\zcref{fig:phase_space_evolution}.

 \subsection{Classical Bounds}\label{subsec:thermal_classical}
Substituting the Green's function of~\eqref{eq:green_photon} into the Dirac trajectory score $\mc{S}$, given in~\eqref{eq:P3_dirac} and integrating out $p$ leaves a Gaussian marginal in $q$ whose half-line probability is the standard normal cumulative distribution function $\Phi$:
\begin{equation}\label{eq:P3cl_photon}
    S(\zz_0) = \frac{1}{3}\sum_{k=0}^{2}\Phi\!\pqty{\frac{\bar{q}(t_k)}{\sigma_q(t_k)}}.
\end{equation}
where we denote the variance of position marginal by 
\begin{align}
    \bqty{\sigma_q}^2(t) = \dfrac{\hbar(2\bar n + 1)}{2 m\omega_0}\bigl(1 - \ee^{-\gamma_0 t}\bigr) \label{eq:sigma_q}
\end{align}
which is the matrix element $C_{qq}$ of the covariance matrix $\mathbf{C}$ in \zcref{eq:variance_photon}.
At the first probing time $t=0$, the variance vanishes, and the quantile collapses to the Heaviside step function $\Theta(q_0)$. 

The classical bound follows from a two-dimensional numerical optimization over the initial point $\mathbf{z}_0$:
\begin{equation}\label{eq:Pcl_photon}
    \mathbf{P}_\text{cl}(\gamma_0, \bar{n}) = \sup_{\mathbf{z}_0 \in \mathbb{R}^2}\frac{1}{3}\sum_{k=0}^{2}\Phi\!\pqty{\frac{\bar{q}(t_k)}{\sigma_q(t_k)}} = \mathbf{P}_\text{cl}(\gamma_0)
\end{equation}
independent of the thermal parameter $\bar n$ or $T$ (\zcref{fig:bounds_thermal}(b)) because the objective function can be made independent of them by a change of variable $\zz_0 \mapsto \zz_0/\sqrt{\bar n +\frac{1}{2}}$.  

In the noiseless limit $\gamma_0 \to 0$, we have $\sigma_q(t) \to 0$ (see~\eqref{eq:sigma_q}), and as a result~\eqref{eq:P3cl_photon} reduces to the average of indicator functions of three half-planes $\frac{1}{3}\sum_k \Theta(\bar{q}(t_k))$, with at most two indicators simultaneously $+1$. The maximum then recovers Tsirelson's geometric bound $\mathbf{P}_\text{cl} = 2/3$ \cite{tsirelson2006}. 

For finite noise, the diffusion smears each trajectory, and a Dirac concentrated near the common boundary of the three half-planes acquires a nonzero probability of being measured positive at all three rounds. Interestingly, even an infinitesimal positive dissipation lifts the bound above the noiseless value $2/3$, with no dependence on bath temperature:
\begin{equation}\label{eq:Pcl_jump_photon}
    \mathbf{P}_\text{cl} (\gamma_0\to 0^+, \bar{n}) = \frac{1}{3}\bqty{1 + \Phi\pqty{\sqrt{2\ln 2}} + \Phi\pqty{-\sqrt{\ln 2}}},
\end{equation}
which evaluates to $\approx 0.6943$ (\zcref{app:limits}). The noiseless geometric bound $2/3$ is not robust: any arbitrarily small coupling to a thermal bath lifts the supremum over non-negative phase space distributions above $2/3$ by a margin of $\approx 0.028$. 

In the opposite limit $\gamma_0 \to \infty$, two of the three marginals collapse onto $\{q > 0\}$, giving the score of $1$, while the third thermalizes to a symmetric configuration, giving the score of $1/2$ (\zcref{app:limits}). Together, the score in high noise limit is
\begin{equation}
    \mathbf{P}_{\rm cl} (\gamma_0 \to \infty, \bar{n}) = \frac{1}{3}\pqty{1+1+\frac{1}{2}} = \frac{5}{6} \,.
\end{equation}
This asymptote lies outside the weak-coupling regime $\gamma_0/\omega_0 \ll 1$ in which the Lindblad equation is derived, and is not used in the certification plots. Detailed derivation of these results is given in~\zcref{app:limits}.

The discontinuous jump for infinitesimal noise is due to the physical scale blindness of the Dirac optimizer: the score in the limit $\gamma \to 0^+$ depends only on the variance ratio $\sigma_q(t_2)/\sigma_q(t_1) \to \sqrt{2}$ (\zcref{app:limits}) and carries no phase space scale. 
The configuration that realizes it is a point mass of infinite phase space density and zero area, which is an unphysical state in quantum theory. Imposing the density cap $W\leq 1/(\pi \hbar)$~\eqref{eq:wigner_cap} brings back the physical scale and allows for a tighter bound. The bound is found by substituting the optimizer \eqref{eq:bathtub_optimizer} at the optimal level set $\lambda^\star$ into the score functional in \eqref{eq:phase_space_score_cl}, which then reduces to a one-dimensional integral that is evaluated numerically. At weak coupling,  $\lambda^\star - 2/3$ is exponentially small in $[(\bar n + 1/2)\gamma_0/\omega_0]^{-1}$ (\zcref{app:wp_thermal}) and
\begin{align}\label{eq:photon_bathtub_slope}
    \mathbf{P}_{\rm wp}(\gamma_0\to 0^+, \bar n) = \frac{2}{3} + \frac{3-2\sqrt{2}}{9\sqrt{3}}\pqty{\bar n +\frac{1}{2}}\frac{\gamma_0}{\omega_0} + \mc{O}(\gamma_0^2)\,.
\end{align}
In contrast to the scale-blind Dirac bound, the Wigner positivity bound rises continuously at $\gamma_0 = 0$ with a slope proportional to $(\bar n + 1/2)$.
\zcref[S]{fig:bounds_thermal}(a) plots the Wigner positivity bound and shows that its slope increases for increasing occupation number $\bar{n}$.

\begin{figure*}
    \centering
    \includegraphics[width=\linewidth]{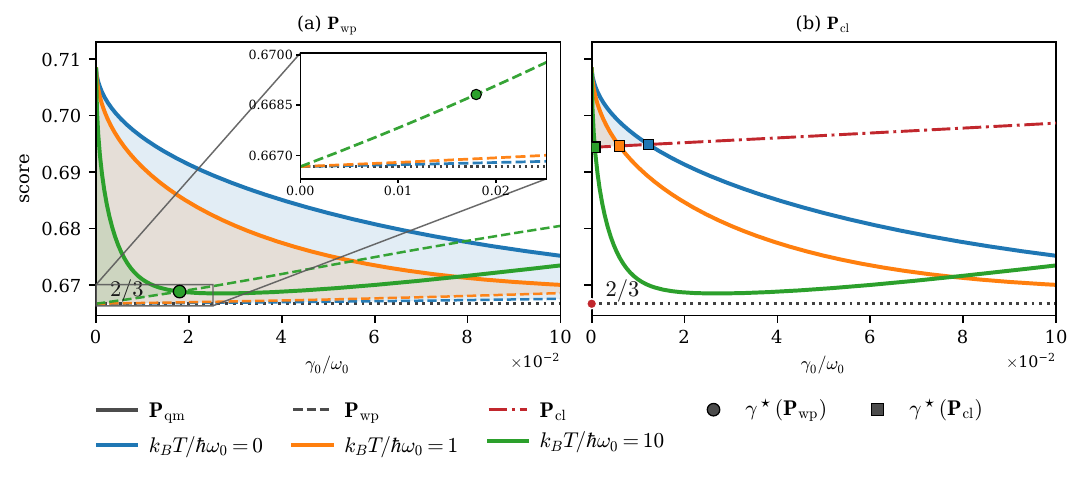}
    \caption{\justifying Score thresholds for the thermal relaxation channel at $k_B T/\hbar\omega_0 = 0, 1, 10$ (colour). It compares the quantum maximum $\mathbf{P}^{\rm th}_\text{qm} = \lambda_{\max}(\hat S)$ (thick solid,~\eqref{eq:Q_photon}) against (a) the Wigner positivity bound $\mathbf{P}^{\rm th}_{\rm wp}$ (dashed,~\eqref{eq:bathtub bound}) and (b) against the Dirac bound $\mathbf{P}^{\rm th}_\text{cl}$ (dash-dotted,~\eqref{eq:Pcl_photon}).}
    \label{fig:bounds_thermal}
\end{figure*}

\subsection{Maximum Quantum Score}
The quantum score is found by analytically finding $\hat S$ defined in~\eqref{eq:Q_def} and then numerically diagonalizing it to find the maximum value.
The score operator $\hat S$ is given by the Heisenberg-evolved $\pos(\hat q;t) = \ee^{\mc{L}^\dag t}[\pos(\hat q)]$. 
Because the dissipator is at most quadratic in canonical operators, the adjoint evolution can be solved in closed form on Weyl symbols. The Weyl symbol of the operator $\pos(\hat q)$ (denoted by the subscript $W$) is the Heaviside step $\Theta(q)$, and its evolution is given by the convolution with the Green's function $G$ of the adjoint dynamics
\begin{align}
    &\pqty{\pos(\hat q;t)}_W(\mathbf{z}) 
    = \int \Theta(q')\, G(\mathbf z', t\mid \mathbf z)\,\dd^2 \mathbf z'. \label{eq:green convol} \\
    &\text{using: } \pos(\hat{q}) = \int_0^{\infty} \ketbra{q'} \dd q'  \leftrightarrow \int_0^{\infty} \delta(q-q') \dd q' = \Theta(q) \nonumber
\end{align}
The Green's function is Gaussian in $\mathbf z'$ with mean 
\begin{equation}\label{eq:qbar_photon}
    \bar{q}(q,p, t) = \ee^{-\gamma_0 t/2}\bqty{q\cos(\omega_0 t) + \frac{p}{m\omega_0}\sin(\omega_0 t)}
\end{equation}
and variance $\sigma_q^2(t)$ of~\eqref{eq:sigma_q}. Therefore,~\eqref{eq:green convol} evaluates to
\begin{equation}\label{eq:posq_symbol_photon}
    \pqty{\pos(\hat q; t)}_W(q,p) = \Phi\!\pqty{\frac{\bar{q}(q,p, t)}{\sigma_q(t)}}.
\end{equation}
with $\Phi$ the standard normal cumulative distribution function.
Since the right-hand side depends on $(q,p)$ only through a linear combination, the Weyl quantization is unambiguous: any function of a linear combination $\hat q$ and $\hat p$ coincides with the Weyl quantisation of the same function applied to its symbol \cite{mccoy1932, hernandez2024}. Defining the rotated, damped quadrature (the subscript H denoting the quantum Heisenberg-evolved operator)
\begin{equation}\label{eq:qH_photon}
    \hat q_H(t) := \ee^{-\gamma_0 t/2}\bqty{\hat q\cos(\omega_0 t) + \frac{\hat p}{m\omega_0} \sin(\omega_0 t)},
\end{equation}
we obtain the closed-form score operator
\begin{equation}\label{eq:posq_op_photon}
    \pos(\hat q; t) = \Phi\!\pqty{\frac{\hat q_H(t)}{\sigma_q(t)}},
\end{equation}
Again, in the closed-system limit $\gamma_0 \to 0$ gives $\sigma_q(t) \to 0$, see~\eqref{eq:sigma_q}, which recovers the noiseless Heisenberg evolution $\pos(\hat q;t) \to \pos(\hat q\cos\omega_0 t + \frac{\hat p}{m\omega_0}\sin\omega_0 t)$.

The score operator is therefore the explicit function of canonical operators
\begin{equation}\label{eq:Q_photon}
    \hat S = \frac{1}{3}\sum_{k=0}^{2} \Phi\!\pqty{\frac{\hat q_H(t_k)}{\sigma_q(t_k)}},
\end{equation}
which we numerically diagonalise in a truncated Fock basis to obtain the maximum quantum score $\mathbf{P}_\text{qm}(\gamma_0, \bar{n}) = \lambda_{\max}(\hat S)$ plotted in~\zcref{fig:bounds_thermal}.

\subsection{Result}
\zcref[S]{fig:bounds_thermal} compares the noise-dependent quantum maximum $\mathbf{P}_\text{qm}(\gamma_0) = \lambda_{\max}(\hat S)$ versus the classical bound $\mathbf{P}_\text{cl}(\gamma_0)$, and the Wigner positivity bound $\mathbf{P}_{\rm wp}$. 
As discussed in~\zcref{subsec:thermal_classical}, the conservative Dirac bound $\mathbf{P}_\text{cl}(\gamma_0)$ jumps discretely to $\approx 0.6943$ \eqref{eq:Pcl_jump_photon} and sits just below the noiseless optimum $\mathbf{P}_{\rm qm}(0) \approx 0.709$, so the certifiable region $\mathbf{P}_{\rm qm}> \mathbf{P}_{\rm cl}$ closes rapidly. The certifiable region boundary $\gamma_0^\star/\omega_0$ (\zcref{sec:quantum_bound}), which is the crossing point of the two curves, lies at $1.22\times 10^{-2}$, $5.95\times 10^{-3}$ and $6.72\times 10^{-4}$ for $k_BT/\hbar \omega_0 = 0, 1$ and $10$, respectively.
The scale-aware Wigner positivity bound, by contrast, leaves $2/3$ continuously, so the certifiable region for Wigner negativity using our Wigner negativity bound $\mathbf{P}_{\rm qm}> \mathbf{P}_{\rm wp}$ remains open across the entire rotating-wave regime $\gamma_0/\omega_0 \lessapprox 0.1$ at temperatures $k_B T/\hbar \omega_0 \in \{0,1\}$ and closes only at $\gamma_0^\star (\mathbf{P}_{\rm wp}) \approx 2\times 10^{-2}$ for higher temperature such as $k_B T/\hbar \omega_0 = 10$, which is one to two orders of magnitude beyond the Dirac crossing. The temperature dependence of the certifiable region is deferred to \zcref{sec:temp_dep}. The temperature dependence enters $\mathbf{P}_{\rm wp}$ through the slope while, as mentioned, the Dirac-delta bound is temperature independent due to its scale: one can rescale the optimal Dirac distribution by $\sqrt{\bar n + 1/2}$ to account for the scale change and achieve the same score $\mathbf{P}_{\rm cl}$ (\zcref{eq:Pcl_photon}). 

The condition $\gamma_0/\omega_0\ll1$ essentially requires a high quality-factor, requiring many coherent oscillations before scattering with a photon.
Ion trapping experiments typically operate around $\omega_0\sim2\pi\cross1$~MHz, which means that for our analysis to be applicable, heating rates of $\gamma_0<10^6$~Hz are allowed. This falls within the typical surface-induced heating rates at room temperature but does put constraints on the trap size~\cite{brownnutt_ion-trap_2015}\footnote[3]{Considering the total heating rate to be $\gamma_h = \gamma_0 \bar{n}$, assuming zero frequency shift.}.
The condition $\gamma_0/\omega_0\ll1$ is also generally satisfied by circuit and cavity QED platforms because the effective atom-field coupling and photon decay rate ($\sim$kHz$-$MHz) are typically smaller than the photon and atomic transition frequencies ($\sim$GHz$-$THz)~\cite{Blais2021cqed,haroche2006exploring,LarsoncQED}.

\section{Case 2: Quantum Brownian Motion (Caldeira-Leggett)}\label{sec:fpe_cl}
The Caldeira-Leggett model is a standard noise model describing the system of a harmonic Hamiltonian coupled to an Ohmic bath~\cite{caldeira1983}. In the high-temperature limit ($k_B T \gg \hbar \omega_0$) of this coupling, the dynamics are governed by the Caldeira-Leggett master equation:
\begin{equation}
    \dot{\hat\rho} = -\frac{\ii}{\hbar}[\hat H_0, \hat\rho]
        - \frac{\ii\gamma}{2\hbar}[\hat q, \{\hat p, \hat\rho\}]
        - \frac{m\gamma k_B T}{\hbar^2}[\hat q, [\hat q, \hat\rho]],
    \label{eq:caldeira_leggett}
\end{equation}
with damping rate $\gamma$. Unlike the master equations for the thermal relaxations in~\eqref{eq:lindblad_photon} and the pure dephasing that will be introduced in~\eqref{eq:lindblad_dephasing}, this equation is not of Lindblad form: the high-$T$ approximation breaks complete positivity at sub-thermal timescales~\cite{breuer2002}. 

Mapping the Caldeira-Leggett master equation to the Wigner phase space yields a Fokker--Planck equation of the Ornstein-Uhlenbeck (OU) form in~\eqref{eq:fpe} with drift and diffusion matrix:
\begin{align}\label{eq:drift_cl}
    \mathbf{A}= \begin{pmatrix} 0 & 1/m \\ -m\omega_0^2 & - \gamma  \end{pmatrix}, \qq{}
    \mathbf{D} = \begin{pmatrix} 0 & 0 \\ 0 & m\gamma k_B T \end{pmatrix}
\end{align}
This is the classical Klein-Kramers equation~\cite{risken2012}. Since the dynamics is Gaussian, the derivation of the scores in the Caldeira--Leggett model is the same as for the thermal relaxations discussed before and we refer to~\eqref{eq:green_photon}--\eqref{eq:Pcl_photon}.
Again, the classical bound is given by the supremum of the normal cumulative distribution functions at the measurements times, but with different mean and variance compared~\zcref{sec:photon-fluctuation}. For the underdamped regime $\gamma< 2\omega_0$, define $\Omega = \sqrt{\omega_0^2 - \gamma^2/4}$.
The Green's function for an initial Dirac located at $\mathbf{z}_0 = (q_0, p_0)$ is a Gaussian, now centred on the trajectory
\begin{equation}\label{eq:mean_cl}
    \bar{q}(t) = \ee^{-\gamma t/2}\bqty{q_0\cos(\Omega t) + \frac{p_0/m + \gamma q_0/2}{\Omega}\sin(\Omega t)},
\end{equation}
where the position variance is given in closed form by
\begin{align}\label{eq:var_cl}
    (\sigma_q)^2 =  \frac{k_BT}{m \omega_0^2}\Bqty{1 - \ee^{-\gamma t}\bqty{1 + \frac{\gamma}{2\Omega}\sin(2\Omega t) + \frac{\gamma^2}{2\Omega^2}\sin^2(\Omega t) }}
\end{align}
See appendix~\ref{app:greens} for the derivation of this position variance.
One notable difference with the thermal relaxation is that only the momentum coordinate is diffusively driven~\eqref{eq:drift_cl}, and radial contraction is replaced by underdamped harmonic motion at the effective frequency $\Omega$, see~\zcref{fig:phase_space_evolution}.

\subsection{Classical \& Quantum Scores}
The classical bound is found by substituting~\eqref{eq:mean_cl} into the classical score of~\eqref{eq:P3_dirac} and integrating over $p$, this gives:
\begin{equation}\label{eq:P3cl_cl}
    S(\zz_0) = \frac{1}{3}\sum_{k=0}^{2} \Phi\!\pqty{\frac{\bar q(t_k)}{\sigma_q(t_k)}},
\end{equation}
where $\sigma_q(t_k)$ is the variance of the position marginal in~\eqref{eq:mean_cl} and $\bar q(t_k)$ is the mean position in~\eqref{eq:mean_cl}. The quantum Brownian motion will give the same inspiral-diffusion behaviour as the thermal relaxations, but with different $\bar q(t_k)$ and $\sigma_q(t_k)$ with respect to the thermal relaxation.
In the noiseless limit $\gamma\to 0$, assuming the temperature stays fixed, $\sigma_q(t)$ becomes zero and the bound returns to $2/3$.

\begin{figure*}[htp!]
    \centering
    \includegraphics[width=\linewidth]{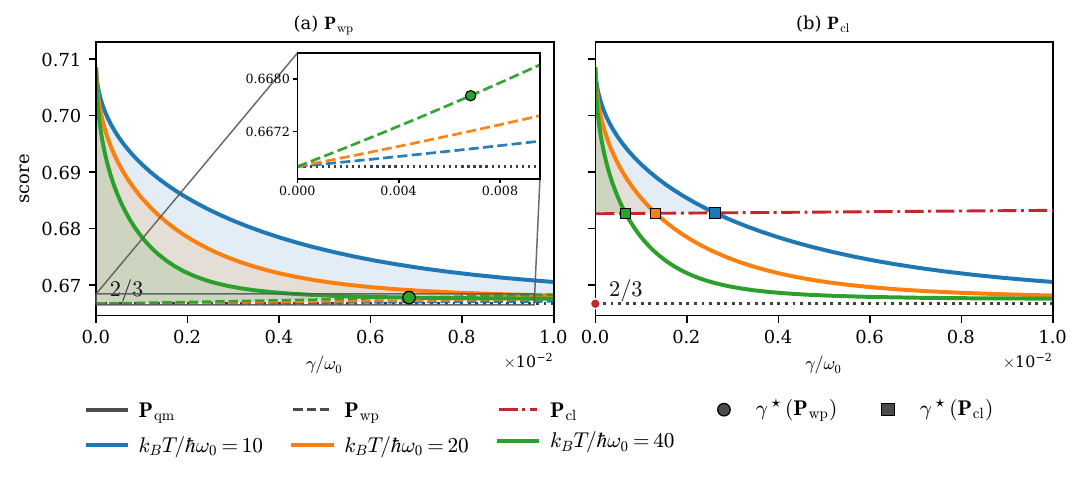}
    \caption{\justifying Score thresholds for the Caldeira-Leggett channel at $k_B T/\hbar\omega_0 = 10, 20, 40$ (colour). It compares the quantum maximum $\mathbf{P}_\text{qm} = \lambda_{\max}(\hat S)$ (thick solid,~\eqref{eq:qh-cl}) against (a) the Wigner positivity bound $\mathbf{P}_{\rm wp}$ (dashed,~\eqref{eq:bathtub bound}) and (b) against the temperature-independent Dirac bound $\mathbf{P}_\text{cl}$ (dash-dotted,~\eqref{eq:P3cl_cl}).}
    \label{fig:bounds_cl}
\end{figure*}

Similar to the thermal relaxation case, the Dirac bound jumps abruptly to a temperature-independent value at $\gamma\to 0^+$
\begin{align}\label{eq:Pcl_jump_CL}
    \mathbf{P}_{\rm cl}(\gamma\to 0^+, T) = \frac{1}{3}\bqty{1 + \Phi(\zeta_1) + \Phi(-\zeta_2)} \approx 0.6826
\end{align}
with $\zeta_1 \approx 1.100$ and $\zeta_2 \approx 0.903$ (\zcref{app:limits}).

The Wigner positivity bound derivation carries over similarly: it is continuous at $\gamma=0$ and rises with a slope $\propto \gamma k_B T$ (\zcref{fig:bounds_cl}),
\begin{align}\label{eq:wp_CL_0}
    \mathbf{P}_{\rm wp}(\gamma \to 0, T) = \frac{2}{3} +  3.71\times 10^{-3} \frac{k_B T}{\hbar \omega_0} \frac{\gamma}{\omega_0} + \mc{O}(\gamma^2)
\end{align}
See~\zcref{app:wp_thermal}.

Similar to the thermal fluctuations case, lifting from Weyl symbols in eq.~\eqref{eq:mean_cl} to operator gives the damped quadrature
\begin{align}\label{eq:qh-cl}
        \hat q_H(t) := \ee^{-\gamma t/2}\bqty{\hat q\pqty{\cos\Omega t + \tfrac{\gamma}{2\Omega}\sin\Omega t} + \frac{\hat p}{m\Omega}\sin\Omega t}, 
\end{align}
which gives the quantum positivity projector
\begin{align}\label{eq:posq_op_cl}
    \pos(\hat q; t) = \Phi\!\pqty{\frac{\hat q_H(t)}{\sigma_q(t)}},
\end{align}
used to define the score operator~\eqref{eq:Q_def}; the maximum quantum score is then found numerically.

\subsection{Result}
Figure~\ref{fig:bounds_cl} shows the relevant bounds for the Caldeira--Leggett channel at $k_B T/\hbar\omega_0 = 10$, $20$ and $40$, all inside the high-temperature regime of the Caldeira-Leggett master equation~\eqref{eq:caldeira_leggett}. As in the photon channel, the Dirac bound jumps to $\approx 0.683$ and crosses $\mathbf{P}_\text{qm}$ early, at: 
$\gamma^\star/\omega_0 \approx 2.61\times 10^{-3}$ ($k_B T/\hbar\omega_0 = 10$), $1.32\times 10^{-3}$ ($k_B T/\hbar\omega_0 = 20$), $6.63\times 10^{-4}$ ($k_B T/\hbar\omega_0 = 40$).
The scale-aware bound $\mathbf{P}_{\rm wp}$ leaves the noiseless $2/3$ bound continuously for increasing noise (\eqref{eq:wp_CL_0}), keeping the certifiable region open across the plotted window $\gamma/\omega_0 \leq 10^{-2}$ for $k_B T/\hbar\omega_0 = 10$ and $20$ and closing it only at $\gamma^\star(\mathbf{P}_{\rm wp}) \approx 6.84\times 10^{-3}$ for $k_B T/\hbar\omega_0 = 40$. 
The temperature still compresses the certifiable region: both $\mathbf{P}_{\rm wp}$ rises and $\mathbf{P}_\text{qm}$ decays faster at higher $T$.
However, against the physical Wigner positivity threshold the certifiable region is several times wider than against the Dirac bound.

Again, experimentally this range of $\gamma/\omega_0$ is applicable to e.g. ion trapping and circuit/cavity QED, where, compared to the thermal relaxation of the previous section, higher temperatures are considered, as shown in~\zcref{fig:bounds_cl}. 

\section{Case 3: Pure dephasing}\label{sec:fpe_deph}
Phase noise without energy exchange is modelled by Lindblad master equation with jump operator $\hat{L}_\phi = \sqrt{\gamma_\phi}\hat{a}^\dag\hat{a}$. 
The dissipator $\mathcal{D}$ will take the form of a nested commutator: $\mathcal{D}[\hat n]\,\cdot\, = -\tfrac{1}{2}[\hat n,[\hat n,\,\cdot\,]]$, resulting in the Lindblad equation:
\begin{equation}\label{eq:lindblad_dephasing}
    \dot{\hat\rho} 
    = -\frac{\ii}{\hbar}[\hat H_0, \hat\rho] + \gamma_\phi\,\mathcal{D}[\hat n]\hat\rho 
    = -\frac{\ii}{\hbar}[\hat H_0, \hat\rho] - \frac{\gamma_\phi}{2}[\hat n,[\hat n,\,\hat\rho\,]].
\end{equation}
Each noise commutator maps to a Moyal bracket with the degree-2 Weyl symbol $n$, which truncates exactly at the Poisson bracket because $\partial^3 n = 0$ (see~\zcref{app:wigner_weyl}); the Wigner equivalent of~\eqref{eq:lindblad_dephasing} therefore closes at finite order in $\hbar$.

After rescaling to dimensionless
quadratures $[Q,P]=\ii$, $(Q, P) = (r\cos \theta, r\sin \theta)$,the Wigner Fokker--Planck equation takes the form (see \zcref{app:greens}):
\begin{equation}\label{eq:fpe_deph}
\frac{\partial W}{\partial t} = \omega_0\frac{\partial W}{\partial \theta} +\frac{\gamma_\phi}{2}\,\frac{\partial^2 W}{\partial\theta^2}.
\end{equation}
The radial profile is frozen and the angular center evolves as $\theta(t) = \theta_0 - \omega_0 t$, with diffusion factor $\gamma_\phi/2$.
The dynamics on phase space distributions is the classical Fokker--Planck equation of action-conserving phase diffusion.
This is illustrated in fig.~\ref{fig:phase_space_evolution}.

\subsection{Classical \& Quantum Scores}\label{sec:dephasing_classical}
The Green's function of~\eqref{eq:fpe_deph} for an initial Dirac located at $(r_0, \theta_0)$ given the dynamics in~\eqref{eq:fpe_deph} is
\begin{equation}\label{eq:green_dephasing}
    G_\phi(r, \theta, t \mid r_0, \theta_0) = \frac{\delta(r-r_0)}{2\pi r_0} \sum_{m\in \mathbb{Z}} \ee^{\ii m(\theta - \varphi_t)}\, \ee^{-\gamma_\phi m^2 t/2}.
\end{equation}
It is centred on the precessing angle $\varphi_t \equiv \theta_0 - \omega_0 t$. Substituting~\eqref{eq:green_dephasing} into the Dirac trajectory score~\eqref{eq:P3_dirac} and integrating over the half-plane $\{q > 0\} = \{\cos\theta > 0\}$ at fixed $r = r_0 > 0$ yields a Fourier series, which is independent of $r_0$:
\begin{align}
    S(\zz_0) = \frac{1}{2} + \frac{2}{3\pi}\sum_{k=0}^{2}\sum_{n=0}^{\infty}&\frac{(-1)^n}{2n+1}\,\ee^{-\gamma_\phi(2n+1)^2 t_k/2} \nonumber \\ &\cos\,\!\bigl((2n+1)\varphi_{t_k}\bigr), \label{eq:P3cl_dephasing}
\end{align}

At $\gamma_\phi = 0$, the Fourier series reconstructs $\Theta(\cos(\varphi_{t_k}))$ via the standard expansion of the square wave, recovering the noiseless geometric value $2/3$. Including the noise, the classical bound depends only on $\theta_0$:
\begin{equation}\label{eq:Pcl_dephasing}
    \mathbf{P}_\text{cl}(\gamma_\phi) = \sup_{\theta_0 \in [0, 2\pi)} \mc{S}[\delta_{\theta_0}].
\end{equation}
The classical bound is therefore evaluated by a one-dimensional numerical maximisation.

The radial independence has an important physical consequence. A coherent state $\ket{\alpha}$, whose Wigner function is a unit-variance Gaussian, becomes angularly localised at large amplitude. Let $\theta^{*}$ be the optimal angle in \eqref{eq:Pcl_dephasing}. The coherent state $\ket{|\alpha|\ee^{i \theta^{*}}}$ approaches the angular Dirac delta $\delta(|\alpha|, \theta^{*})$ in the limit $\abs{\alpha} \to \infty$, and saturates~\eqref{eq:Pcl_dephasing} from below. Thus, the supremum is asymptotically attained by a Wigner-positive state, and the certification thresholds for Wigner negativity ($\mathbf{P}_\text{wp}$) and nonclassicality ($\mathbf{P}_\text{cl}$) coincide for this channel. Consequently, the certifiable region of the dephasing channel never strictly closes at any noise strength; what decays with $\gamma_\phi$ is the violation margin but not the certifiability. 

Since $\hat L_\phi = \sqrt{\gamma_\phi}\hat n$ and $\hat H_0$ are simultaneously diagonal in the Fock basis, the adjoint dynamics is diagonal there and the score operator is obtained from $\pos(\hat q)$ by exponential suppression of off-diagonal matrix elements:
\begin{equation}\label{eq:posq_dephasing}
    \mel{m}{\pos(\hat q; t)}{n} = \ee^{\ii\omega_0(m-n)t}\, \ee^{-\gamma_\phi(m-n)^2 t/2}\, \mel{m}{\pos(\hat q)}{n}.
\end{equation}

\subsection{Result}
\zcref[S]{fig:bounds_dephasing} compares the bounds for the dephasing channel over the small-noise window $\gamma_\phi/\omega_0 \in [0, 0.05]$. In contrast to the thermal channels, the classical bound rises continuously and only marginally above the noiseless bound $2/3$.

Dephasing is be caused by changes to the the resonance frequency, experimentally this can be sourced by for example fluctuating charge noise (shifting the trap potential or cavity frequency), voltage noise in ion traps, or magnetic field fluctuations in dielectric/superconducting traps.
Typically, the thermal noise is dominant over the dephasing channel.

\begin{figure}
    \centering
    \includegraphics[width=\linewidth]{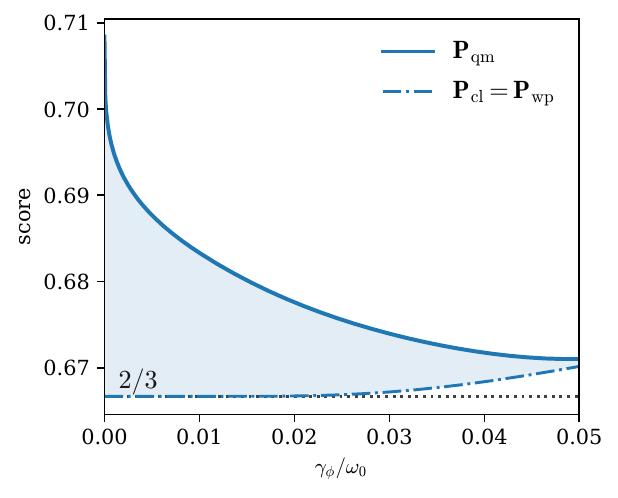}
    \caption{\justifying Quantum maximum score ($\mathbf{P}_{\rm qm}$) and Dirac-delta bound (which coincides with the exact Wigner positivity bound in this case $\mathbf{P}_{\rm cl}= \mathbf{P}_{\rm wp,e} = \mathbf{P}_{\rm wp}$) for the pure dephasing channel}
    \label{fig:bounds_dephasing}
\end{figure}

\section{Methods \& Results}\label{sec:results}
The previous sections have given an condensed outline of the derivation of the quantum and classical bounds. 
The supporting analytics for the score operator, Weyl symbols and dynamics in phase space can be found in~\zcref{app:score_op_matrix,app:wigner_weyl,app:greens}, respectively. 
Additionally, the thermal channels are discussed in more detail in~\zcref{app:limits,app:wp_thermal}, concerning the discontinuity in the Dirac bound and derivation of the Wigner positivity bounds, respectively. How a bounded frequency uncertainty propagates into the certification thresholds is discussed in~\zcref{app:freq_shift}.

Besides the supporting analytics in the appendix, we here give a brief description of the used numerical methods. Additionally, we point out some interesting features of the result regarding the noiseless optimal state and the certifiable region. 

\begin{figure*}[hbt!]
    \centering
    \includegraphics[width=\linewidth]{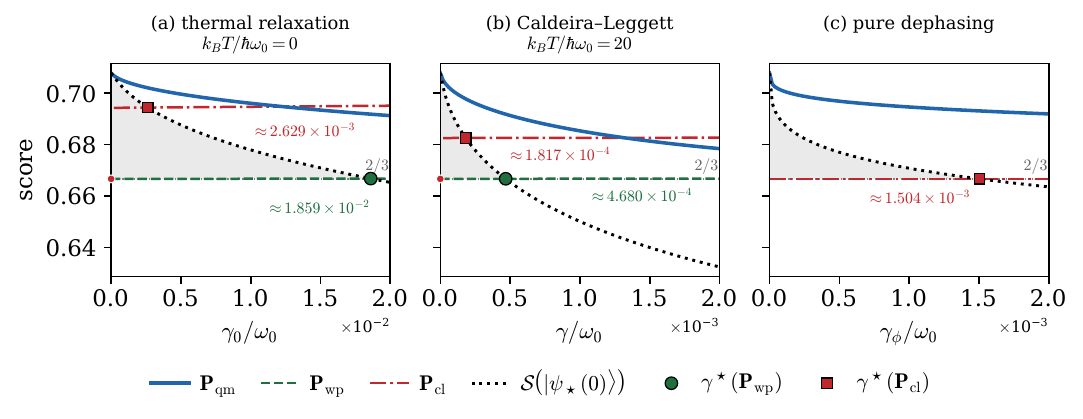}
    \caption{\justifying Robustness of noiseless optimal state the three noise models discussed here: (a) thermal relaxations of~\zcref{sec:photon-fluctuation}, (b) pure dephasing of~\zcref{sec:fpe_deph} and (c) the Caldeira--Leggett model of~\zcref{sec:fpe_cl}. 
    The figure plots the maximum quantum score (solid), the Wigner positivity bound (dashed) and the Dirac classical bound (dash-dotted), as well as the projection of noisy score operator on the noiseless optimal state $\ket{\psi_\star(0)}$~\eqref{eq:noisy-optimal}. For pure dephasing $\mathbf{P}_{\rm wp} = \mathbf{P}_{\rm cl}$.}
    \label{fig:noiseless_robustness}
\end{figure*}
\begin{figure*}[hbt!]
    \centering
    \includegraphics[width=0.8\linewidth]{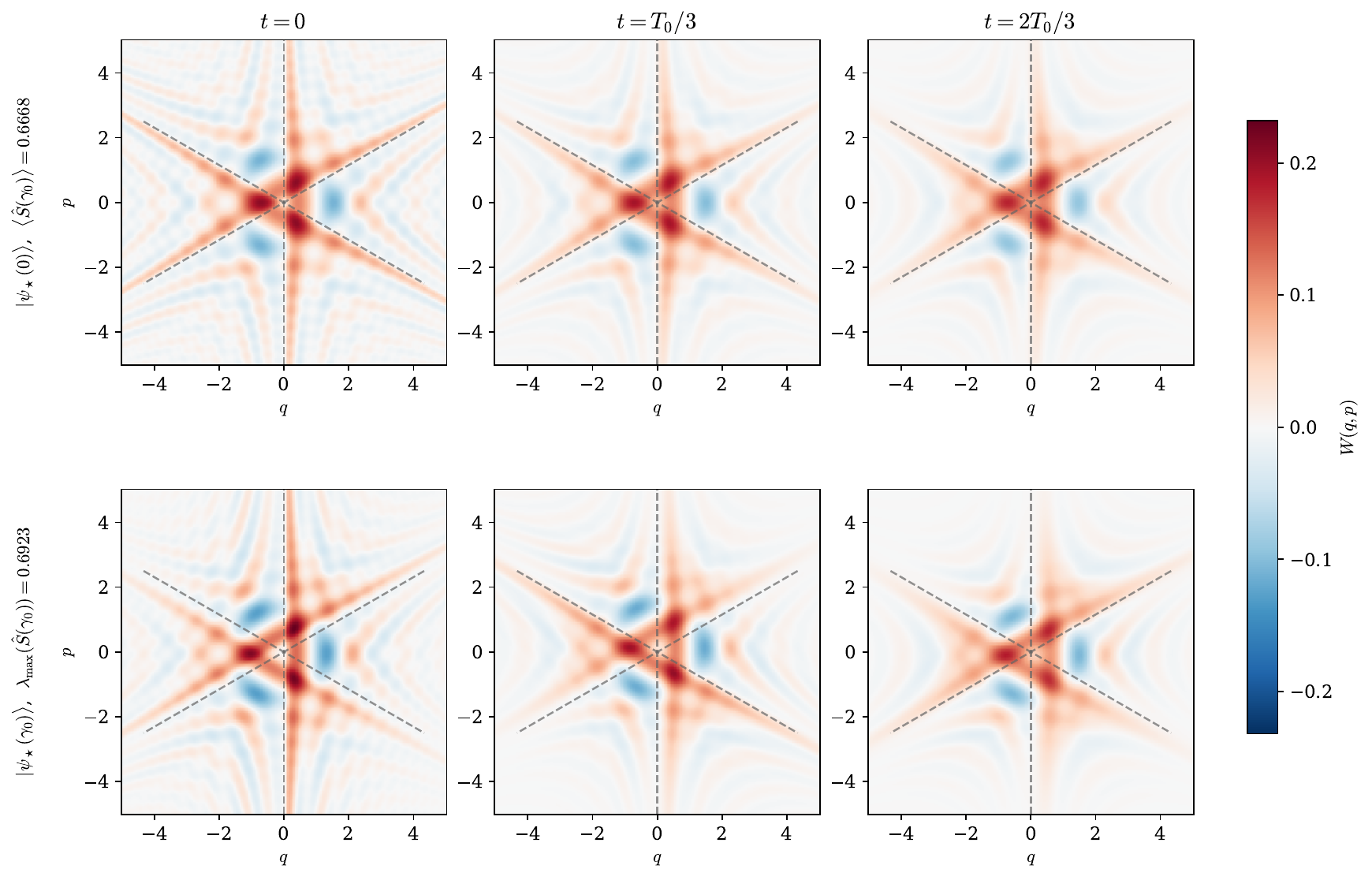}
    \caption{\justifying Evolution of noiseless (top panel) and adapted (bottom panel) optimal state in the thermal relaxation model at $\gamma_0/\omega_0 = 1.8\times 10^{-2}$}
    \label{fig:optimal_state_evol}
\end{figure*}

\subsection{Finding the quantum score}
Unlike the unitary dynamics in the ideal case, the positivity projector does not commute with
the dissipative evolution. Under a noisy channel, the evolved positivity projector $\ee^{\mc L^\dagger t}[\pos(\hat q)]$ is not the same as the projector onto the positive part of the evolved quadrature $\pos(\hat q(t))$. It is not clear, then, that the score operator possesses an elementary expression, and one is left propagating the full adjoint Liouvillian, which is costly and limits our numerical reach. For example, propagating the vectorized adjoint Liouvillian via the standard QuTiP mesolve routine is limited to a Fock cut-off of order $D\lessapprox 100$ in practice for the numerics done in this paper. We avoid that propagation by passing through phase space. The Weyl symbol of $\pos(\hat q)$ is the Heaviside step $\Theta(q)$, and for the channels treated here, the adjoint dynamics acts on Weyl symbols as a backward Fokker-Planck flow determined by the convolution with the corresponding Green's function. That Green's function is an explicit Gaussian in the two thermal channels (\zcref{app:greens}), so the convolution is evaluated in closed form. Requantizing the resulting symbols returns the smoothed quadrature operator (\zcref{eq:posq_op_photon, eq:posq_op_cl}). 

The quantum maximum scores in~\zcref{fig:bounds_thermal,fig:bounds_cl,fig:bounds_dephasing} are obtained by diagonalizing the Fock-truncated score operator $\hat{S}_D := P_D \hat{S} P_D$ at $D = 2100$ where $P_D := \sum_{k=0}^{D-1} \ketbra{k}$ is the projector on the first $D$ Fock states. Since the expectation value $\bra{\psi}\hat{S}\ket{\psi}$ only reads matrix elements within the support of $\ket{\psi}$, any eigenvector at cut-off $D$ is a legitimate normalized state and
\begin{align}
    \lambda_{\max}(\hat{S}_D) \leq \mathbf{P}_{\rm qm}
\end{align}
with equality as $D\to \infty$. The plotted quantum curves are therefore certified lower bounds on the true quantum maximum score, and the genuine certifiable regions are at least the ones read from the figures. Matrix elements of $\hat{S}_D$ are given in \zcref{eq:posq_dephasing} for the pure dephasing case, while the more elaborate constructions for thermal models are deferred to \zcref{app:score_op_matrix}. 

\begin{figure*}[hbt!]
    \centering
    \includegraphics[width=0.8\linewidth]{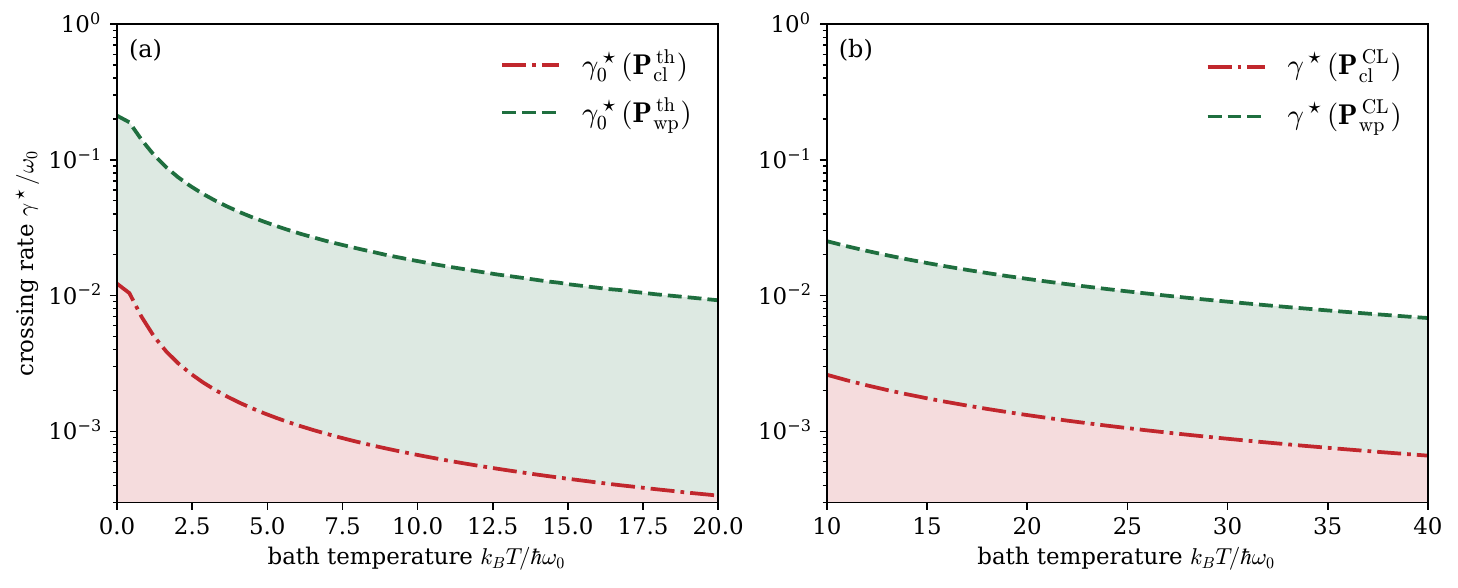}
    \caption{\justifying Certifiable region for thermal relaxation and Caldeira-Leggett model as functions of temperature.}
    \label{fig:gamma*_vs_T}
\end{figure*}

\subsection{Robustness of the noiseless optimum}
A realistic experiment may use the state optimized for the noiseless protocol ~\cite{zaw2022}, denoted $\ket{\psi_\star(0)}$, in a noisy device. As decoherence, which manifests as phase space diffusion, grows, it erodes the Wigner negativity responsible for the quantum violation and thus degrades the score of the noiseless optimal state (top panel in \zcref{fig:optimal_state_evol}). 

The score of this state under noise is given by the expectation value with the noise-dressed score operator:
\begin{equation}\label{eq:noisy-optimal}
    \mc{S}(\ket{\psi_\star(0)}, \gamma) = \mel{\psi_\star(0)}{\hat S(\gamma)}{\psi_\star(0)}\,,
\end{equation} 
\zcref[S]{fig:noiseless_robustness} compares this score against the classical Dirac and Wigner positivity bounds and the optimal quantum maximum $\mathbf{P}_{\rm qm}(\gamma)$. The non-adapted state degrades monotonically and ceases to certify well before the re-optimized one.
The crossing are $\gamma_0/\omega_0 = 2.629\times 10^{-3}$ against $\mathbf{P}_{\rm cl}$ and $1.859\times10^{-2}$ against $\mathbf{P}_{\rm wp}$ for thermal relaxation, $\gamma/\omega_0 = 1.817\times 10^{-4}$ and $4.680\times10^{-4}$ for the corresponding Caldeira-Leggett comparison, 
and at $\gamma_\phi \approx 1.504\times 10^{-3}$ under pure dephasing, where the two thresholds coincide.

\zcref[S]{fig:noiseless_robustness} thus shows that the noiseless optimal state is not the optimal state under noise. The adapted optimal state is defined by the same eigenvalue problem $\hat S(\gamma) \ket{\psi_\star(\gamma)} = \mathbf{P}_{\rm qm}\ket{\psi_\star(\gamma)}$. \zcref[S]{fig:optimal_state_evol} shows the two states side-by-side. The adapted optimal state is a deformation of the noiseless one with the negativity redistributed optimally for the diffused dynamics.

\subsection{Temperature dependence of certifiable region in thermal models}\label{sec:temp_dep}
As shown in~\zcref{fig:bounds_thermal,fig:bounds_cl}, increasing the temperature of the bath reduces the certifiable region. Here, we briefly note on the relative effects of the dissipation rate and bath temperature.
Both thermal relaxation and Caldeira--Leggett carry an additional temperature parameter: $\bar n$ and $k_B T$, respectively, related by $\bar n = \bqty{\exp\pqty{\frac{\hbar \omega}{k_BT}} - 1}^{-1}$. 
They control the diffusion strength independently of the dissipation rate. Increasing temperature accelerates the collapse of $\mathbf P_\text{qm}$ through decoherence while raising comparison curves, so the certifiable region shrinks monotonically with these temperature parameters. 
\zcref[S]{fig:gamma*_vs_T} traces, for both thermal models, the crossing rates $\gamma_0^\star$ that bound the certifiable region for the two classical bounds. Meanwhile, the classical Dirac bound is temperature independent because it is agnostic to phase space scale.
The figure shows the importance of the bath temperature rather than the dissipation rate. Typically, only the combined $\gamma_0\bar{n}$ is measurable in (ion trapping) experiments. Only at sub-mK bath temperatures is it expected to be possible to measure $\gamma_0$ and $\bar{n}$ separately~\cite{brownnutt_ion-trap_2015}.

\section{Conclusion}\label{sec:conclusion}

We have extended Tsirelson's precession protocol from the isolated harmonic oscillator to an open-system setting, covering three standard dissipative channels relevant to continuous-variable hardware: thermal relaxation, pure dephasing, and Caldeira--Leggett quantum Brownian motion. Using the Moyal--Wigner phase space formalism, each channel reduces to a Fokker--Planck equation that is linear and positivity-preserving on the Wigner function. This allowed us to compute, in closed form, both the dissipation-dependent classical bounds and the maximum quantum score $\mathbf{P}_\text{qm}(\gamma) = \lambda_{\max}(\hat S)$ of the analytically derived score operator $\hat S$.
These bounds give a sufficient but not necessary requirement for non-classicality.
We have considered two notions of nonclassicality, each giving different comparison bounds: $\mathbf{P}_\text{cl}(\gamma)$ is the maximum score attainable by any non-negative phase space distribution, and $\mathbf{P}_\text{wp}(\gamma)$ is an upper bound for the maximum score attainable by any Wigner-positive function.
The former is attained by a Dirac extreme point, giving the strictest constraint for classicality, while violation of the latter certifies Wigner negativity. 
With the introduction of dissipation in the dynamics we have shown that (1) these two bounds no longer coincide for the thermal channels while they coincide as a suprema for pure dephasing, and (2) the comparison bounds change as a function of noise. 
The framework developed in this work can also test other notions of non-classicality, such as non-Gaussianity~\cite{Note1}, providing a flexible and efficient witness for various forms of non-classicality under noisy dynamics.

For small coupling to the environment, the precession protocol as a Wigner negativity witness is robust against noise.
However, the classical Dirac bound for classicality jumps up with the introduction of noise, making the quantum certifiable region very small even at low noise, especially under thermal relaxation.
Applications of the Tsirelson precession protocol should take into account their expected noise budget when constructing the dynamical nonclassicality witness.

\section*{Acknowledgments}

We thank Lin Htoo Zaw and Jackson Tiong for several discussions.

This work is supported by the National Research Foundation, Singapore through the National Quantum Office, hosted in A*STAR, under its Centre for Quantum Technologies Funding Initiative (S24Q2d0009).

\bibliography{references}
\newpage

\appendix
\begin{widetext}


\section{Weyl symbols, Wigner function, and Moyal star product}\label{app:wigner_weyl}
For an operator $\hat F$ acting on $L^2(\mathbb R)$, the Weyl symbol is \cite{curtright2013}
\begin{equation}\label{eq:weyl_symbol_def}
    F(q,p) = \int \dd y\,\ee^{-\ii p y/\hbar}\,\mel{q + y/2}{\hat F}{q - y/2},
\end{equation}
and the inverse (Weyl quantisation) reconstructs $\hat F$ from $F$ via the parity-displacement kernel. The Wigner function of $\hat\rho$ is the symbol of $\hat\rho/(2\pi\hbar)$,
\begin{equation}\label{eq:wigner_def}
    W_\rho(q,p) = \frac{1}{2\pi\hbar}\int \dd y\,\ee^{-\ii p y/\hbar}\,\mel{q+y/2}{\hat\rho}{q-y/2},
\end{equation}
which is real, normalised, and reproduces the correct quadrature marginals: $\int W_\rho\,\dd p = \mel{q}{\hat\rho}{q}$ and $\int W_\rho\,\dd q = \mel{p}{\hat\rho}{p}$. Under this map the operator product translates to the Moyal star product
\begin{equation}\label{eq:star_app}
    (F\star G)(q,p) = F(q,p)\,\exp\!\bqty{\tfrac{\ii\hbar}{2}\pqty{\overleftarrow{\partial}_q\overrightarrow{\partial}_p - \overleftarrow{\partial}_p\overrightarrow{\partial}_q}}\,G(q,p),
\end{equation}
with derivative expansion
\begin{align}\label{eq:star_expansion}
    F\star G =& FG + \tfrac{\ii\hbar}{2}\PB{F}{G} 
    - \tfrac{\hbar^2}{8}\bqty{\partial_q^2 F\,\partial_p^2 G - 2\partial_q\partial_p F\,\partial_q\partial_p G + \partial_p^2 F\,\partial_q^2 G} 
    + \mathcal O(\hbar^3),
\end{align}
where $\PB{F}{G} = \partial_q F\,\partial_p G - \partial_p F\,\partial_q G$ is the Poisson bracket. The series terminates exactly whenever one of the symbols is a polynomial of degree $\leq 2$ in $(q,p)$, since all third and higher phase space derivatives of a quadratic symbol vanish. The Moyal bracket
\begin{equation}\label{eq:moyal_bracket_app}
    \MB{F}{G} = \frac{F\star G - G\star F}{\ii\hbar} = \PB{F}{G} + \mathcal O(\hbar^2)
\end{equation}
inherits the same termination, and $\MB{F}{G} = \PB{F}{G}$ exactly whenever $F$ or $G$ is at most quadratic~\cite{dubois2021,hernandez2024}. 

For the Fock-basis calculation in all of the following appendices, we rescale the canonical variables to dimensionless quadratures $\hat Q = \sqrt{m\omega_0/\hbar}\,\hat q$, $\hat P = \hat p/\sqrt{m\hbar\omega_0}$
satisfying $[\hat Q, \hat P]=\ii$ for ease of analytic derivation. Passing to the dimensionless quadratures effectively rescales $\hbar \to 1$: the star prefactor becomes $\ii/2$, and the Moyal bracket reads $\MB{F}{G} = (F\star G - G\star F)/\ii$.

\section{Matrix elements of the score operator for the thermal models}\label{app:score_op_matrix}
The protocol's quantum score is obtained through diagonalizing the score operator $\hat S$ in the truncated Fock basis. This appendix details the closed-form matrix element of $\hat{S}$ for the thermal models considered in this paper, which share a common structure. At $t_0 = 0$, the projector is undamped and reduces to the positivity projector
\begin{align}
    \text{pos}(\hat{Q}, t_0) := \frac{1}{2}[\mathbbm{1} + \sgn(\hat{Q})]
\end{align}
where the sign operator $\sgn(\hat Q)$ has matrix elements that are zero if the indices have the same parity, and, for $m$ even and $n$ odd takes the closed form~\cite{zaw2022}
\begin{align}
\bra{m}\sgn(\hat{Q})\ket{n} = \frac{(-1)^{(n-m-1)/2}2^{-[(n+m)/2-1]}}{n-m}\sqrt{\frac{n}{\pi}\binom{m}{\frac{m}{2}}\binom{n-1}{\frac{n-1}{2}}}
\end{align}
The remaining elements ($m$ odd, $n$ even) are fixed by the Hermiticity of $\sgn(\hat Q)$. 

At later probing times $t_1, t_2$ the Heisenberg-evolved projector is a smoothed step of a damped and rotated quadrature 
\begin{align*}
    \text{pos}(\hat{Q}, t) = \Phi(a_t \hat{Q} + b_t \hat{P}) 
\end{align*}
whose matrix elements
\begin{align}\label{}
    \bra{m}\text{pos}(\hat{Q}, t)\ket{n} = \bra{m}\Phi(a_t \hat{Q} + b_t \hat{P})\ket{n} = \ee^{i(m-n)\beta_t}\bra{m}\Phi(c_t \hat{Q})\ket{n}
\end{align}
where $a_t+\ii b_t =: c_t \ee^{i\beta_t}$. 

In the following, we derive recursion relations for the Fock-basis
matrix elements of $\Phi(c\hat Q)$. We start with the following lemma:
\begin{lemma}\label{lemm:comm_f}
    Let $f$ be an analytic function, i.e., it can be written as a convergent power series $f(x) = \sum_{k=0}^\infty c_kx^k$. Then,
    \begin{align*}
        [\hat a, f(\hat Q)] = \frac{1}{\sqrt{2}}f'(\hat Q)
    \end{align*}
\end{lemma}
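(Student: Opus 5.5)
The approach is to reduce the statement to a commutator identity for monomials, then extend it to the power series term by term. With the dimensionless quadratures of \zcref{app:wigner_weyl}, $\hat a = (\hat Q + \ii\hat P)/\sqrt 2$ and $[\hat Q,\hat P]=\ii$. Since $\hat Q$ commutes with any function of itself, $[\hat a, f(\hat Q)] = \frac{\ii}{\sqrt 2}[\hat P, f(\hat Q)]$. It therefore suffices to show $[\hat P, f(\hat Q)] = -\ii f'(\hat Q)$, because then $\frac{\ii}{\sqrt2}(-\ii)f'(\hat Q) = \frac{1}{\sqrt2}f'(\hat Q)$.

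First I would prove $[\hat P,\hat Q^k] = -\ii k\hat Q^{k-1}$ by induction on $k$. The case $k=0$ is trivial and $k=1$ is the canonical commutator. For the inductive step, the Leibniz rule gives $[\hat P,\hat Q^{k+1}] = [\hat P,\hat Q^k]\hat Q + \hat Q^k[\hat P,\hat Q]$, which equals $-\ii(k+1)\hat Q^k$. Linearity then yields $[\hat a, p(\hat Q)] = \frac{1}{\sqrt2}p'(\hat Q)$ for every polynomial $p$, and in particular for each partial sum $f_N(x)=\sum_{k\le N}c_kx^k$.

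The main obstacle is passing to the limit $N\to\infty$, because $\hat Q$ and $\hat a$ are unbounded. I would work on a suitable dense domain, either the span of Fock states or, more robustly, the position representation. In the position representation, $f(\hat Q)$ acts as multiplication by $f(x)$ and $\hat P = -\ii\,\dd/\dd x$. For $\psi$ in the Schwartz class (or Hermite functions), with $f$ such that $f\psi$ stays in the domain, the product rule gives directly
\begin{equation*}
    [\hat P, f(\hat Q)]\psi = -\ii\,(f\psi)' + \ii f\psi' = -\ii f'\psi .
\end{equation*}
This bypasses the series altogether and only needs $f$ to be differentiable with controlled growth. Under the lemma's analyticity hypothesis, one can alternatively note that $f_N\to f$ and $f_N'\to f'$ locally uniformly on $\mathbb R$. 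Dominated convergence on matrix elements $\langle\phi|\cdot|\psi\rangle$ between Hermite functions then transfers the polynomial identity, since the Gaussian decay of these functions controls the growth of $f$ for the functions actually used (for instance $f=\Phi(c\,\cdot)$, which is bounded with a Gaussian derivative).

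I would present the algebraic polynomial proof as the main argument. The position-representation computation would then justify that the identity holds as an equality of operators on the Fock-state domain, which is all that is needed for the recursion relations on $\langle m|\Phi(c\hat Q)|n\rangle$ that follow.
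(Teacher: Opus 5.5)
Your proof is correct, and its algebraic core is the paper's. The paper proves $[\hat a,\hat Q^k]=\frac{k}{\sqrt2}\hat Q^{k-1}$ by induction, which is your $[\hat P,\hat Q^k]=-\ii k\hat Q^{k-1}$ multiplied by $\ii/\sqrt2$. It then sums this against the Taylor coefficients without saying anything about convergence.

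You go beyond it at the limit step. Your position-space computation $[\hat P,f(\hat Q)]\psi=-\ii f'\psi$ on Hermite/Schwartz functions proves the identity directly and needs no power series at all. It only requires $f\in C^1$ with enough control that $f\psi$ stays in the domain of $\hat P$, which holds for both $\Phi(c\,\cdot)$ and $\phi(c\,\cdot)$. It also gives exactly the Fock matrix elements the recursions use. The paper's route is shorter; yours is the one that is actually valid where the lemma is applied.

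The termwise step fails for the functions in question. For $f=\Phi(c\,\cdot)$ the partial sums satisfy $\sup_N|f_N(x)|\sim \ee^{c^2x^2/2}$ up to polynomial factors. A product of two Hermite functions decays only like $\ee^{-x^2}$ times a polynomial. Concretely, if you expand $\Phi(c\,\cdot)$ in its Taylor series, the $n$-th nonvanishing contribution to $\langle 0|\Phi(c\hat Q)|1\rangle$ has magnitude of order $(c^2/2)^n/\sqrt n$. So the Fock matrix elements of the partial sums diverge for $c>\sqrt2$.

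This is precisely the regime of interest. In the thermal channel $c_t=[(\bar n+\tfrac12)(\ee^{\gamma_0t}-1)]^{-1/2}\to\infty$ as $\gamma_0\to0$, i.e.\ in the weak-noise limit.

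For the same reason, your ``alternative'' dominated-convergence argument does not work as written. What must be dominated is the sequence $f_N$, not $f$. The boundedness of $\Phi$ and the Gaussian decay of $\Phi'$ say nothing about that sequence. Drop that alternative and present the position-representation computation as the proof, keeping the monomial induction only as motivation.
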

\begin{proof}
By induction, $[\hat a, \hat Q^k] = \frac{k}{\sqrt{2}} \hat{Q}^{k-1}$. Applying this for each term in the convergent power series gives
\begin{align*}
    [\hat a, f(\hat Q)]  = \sum_{k=0}^\infty c_k [\hat a, \hat Q^k] = \frac{1}{\sqrt{2}}\sum_{k=0}^\infty c_k k \hat Q^{k-1} = \frac{1}{\sqrt{2}}f'(\hat Q)
\end{align*}
\end{proof}
Recall that $\frac{\dd}{\dd x}\Phi(cx) = c\phi(cx), \ \frac{\dd}{\dd x}\phi(cx) = -c^2x\phi(cx)$ and apply \zcref{lemm:comm_f}, we have
\begin{align}\label{eq:lemm1_app_1}
   [\hat a, \Phi(c\hat Q)] &= \frac{c}{\sqrt{2}} \phi(c\hat Q)\\
    [\hat a, \phi(c\hat Q)]&= -\frac{c^2}{\sqrt{2}}\hat{Q}\phi(c\hat Q)\label{eq:lemm1_app_2}
\end{align}
Denote $A_{m,n}:=  \bra{m}\Phi(c\hat Q) \ket{n}$ and $B_{m,n} =  \bra{m}\phi(c\hat Q) \ket{n}$. In Fock basis, both matrices are real and
symmetric. Parity, together with $\Phi(-x)=1-\Phi(x)$ and $\phi(-x)=\phi(x)$, gives  
\begin{align*}
    A_{mm}&=\frac{1}{2}\\
    A_{mn}&=0
        \quad\text{if }m\neq n\text{ and }m+n\text{ is even},\\
    B_{mn}&=0
        \quad\text{if }m+n\text{ is odd}.
\end{align*}

For $n\geq1$, taking the matrix elements of
\zcref{eq:lemm1_app_1, eq:lemm1_app_2} between $\bra m$ and $\ket{n-1}$ gives
\begin{align*}
    \sqrt{m}\,A_{m-1,n-1}-\sqrt{n}\,A_{mn}
        &=-\frac{c}{\sqrt{2}}B_{m,n-1},\\
    \sqrt{m}\,B_{m-1,n-1}-\sqrt{n}\,B_{mn}
        &=\frac{c^2}{2}
          \left(
              \sqrt{n-1}\,B_{m,n-2}
              +\sqrt{n}\,B_{mn}
          \right),
\end{align*}
where we used
$[\hat Q,\phi(c\hat Q)]=0$ in the second line. Solving these
relations yields
\begin{align}
    A_{mn}
        &=\sqrt{\frac{m}{n}}A_{m-1,n-1}
          +\frac{c}{\sqrt{2n}}B_{m,n-1},\\
    B_{mn}
        &=r\sqrt{\frac{m}{n}}B_{m-1,n-1}
          -(1-r)\sqrt{\frac{n-1}{n}}B_{m,n-2}\end{align}
for $0\leq m \leq n$ and $r:=\left(1+\frac{c^2}{2}\right)^{-1}$.

The required seed values are $A_{00}=\frac{1}{2}$ and $B_{00} =\frac{1}{\sqrt{2\pi}}
          \bra{0}\ee^{-c^2\hat Q^2/2}\ket{0}
          =\sqrt{\frac{r}{2\pi}}$.

The recurrences determine the upper triangles column by column. Any
term carrying a negative Fock index is set to zero, while an element
below the diagonal is replaced using
$A(B)_{mn}=A(B)_{nm}$. The lower triangles then follow
by symmetry.

\section{Dissipative dynamics in phase space}\label{app:greens}
We derive in this appendix the dynamics of Wigner function in phase space and the explicit Green's functions $G(\mathbf{z}, t \mid \mathbf{z}_0)$ used to evaluate~\eqref{eq:P3_dirac} for each of the three channels.

\subsection{Thermal relaxation}
\subsubsection{Phase space dynamics}
With $\hat a = (\hat Q + \ii\hat P)/\sqrt 2$, the Weyl symbols of $\hat a, \hat a^\dag$ are complex conjugates of each other in $(Q, P)$ :
\begin{equation}\label{eq:weyl_a}
    a(Q,P) = \tfrac{1}{\sqrt 2}(Q + \ii P), \qquad a^*(Q,P) = \tfrac{1}{\sqrt 2}(Q - \ii P).
\end{equation}
The jump operators $\hat L_- = \sqrt{\gamma_0(\bar n+1)}\,\hat a$ and $\hat L_+ = \sqrt{\gamma_0\bar n}\,\hat a^\dag$ therefore have linear symbols, and the FPE is exact.

With $\hat H_0 = \hbar\omega_0(\hat a^\dag\hat a + 1/2)$, the symbol is $H_0 = \tfrac{\omega_0}{2}(Q^2 + P^2)$. Since $H_0$ is quadratic, $\MB{H_0}{W} = \PB{H_0}{W} = \omega_0(Q\,\partial_P W - P\,\partial_Q W)$ exactly, generating harmonic precession.

For the dissipator terms, the relevant Poisson brackets are 
\begin{align}\label{eq:photon_pb}
    &\PB{Q}{a} = \tfrac{\ii}{\sqrt 2},\quad \ \  \PB{Q}{a^*} = -\tfrac{\ii}{\sqrt 2},\quad \\
    &\PB{P}{a} = -\tfrac{1}{\sqrt 2},\quad \PB{P}{a^*} = -\tfrac{1}{\sqrt 2}.
\end{align}

The contribution of $\hat L_-$ to the drift on $Q$, is
\begin{align*}
\mu_Q^{(-)} & = -\tfrac{\ii}{2}\,\gamma_0(\bar n+1)\bqty{a\,\PB{Q}{a^*} + a^*\,\PB{a}{Q}}             \\
& = -\tfrac{\ii}{2}\,\gamma_0(\bar n+1)\bqty{a\pqty{-\tfrac{\ii}{\sqrt 2}} - a^*\pqty{\tfrac{\ii}{\sqrt 2}}}\\ & = -\tfrac{\gamma_0(\bar n+1)}{2}\cdot\tfrac{a + a^*}{\sqrt 2}\\
& = -\tfrac{\gamma_0(\bar n+1)}{2}\,Q.
\end{align*}
Likewise $\hat L_+$ contributes $\mu_Q^{(+)} = +\tfrac{\gamma_0\bar n}{2}\,Q$, and the two combine into $\mu_Q^{\text{diss}} = -\tfrac{\gamma_0}{2}\,Q$. The identical calculation for $z_i = P$ gives $\mu_P^{\text{diss}} = -\tfrac{\gamma_0}{2}\,P$. Adding the Hamiltonian precession reproduces, in dimensionless variables, the drift $\mathbf A^{(Q,P)}\mathbf z = (-\gamma_0 Q/2 + \omega_0 P,\,-\omega_0 Q - \gamma_0 P/2)^T$. Substituting $Q = \sqrt{m\omega_0/\hbar}\,q$, $P = p/\sqrt{m\hbar\omega_0}$ converts this to the dimensioful drift matrix~\eqref{eq:drift_photon}.

The diffusion terms are given by
\begin{align*}
    D_{QQ} & = \tfrac{1}{2}\,\Re\!\bqty{\gamma_0(\bar n+1)\PB{Q}{a}\PB{Q}{a^*} + \gamma_0\bar n\,\PB{Q}{a^*}\PB{Q}{a}}
    = \tfrac{\gamma_0(2\bar n+1)}{4}, \\
    D_{PP} & = \tfrac{\gamma_0(2\bar n+1)}{2}\,\Re\!\bqty{\pqty{-\tfrac{1}{\sqrt 2}}\!\pqty{-\tfrac{1}{\sqrt 2}}} = \tfrac{\gamma_0(2\bar n+1)}{4},   \\
    D_{QP} & = \tfrac{\gamma_0(2\bar n+1)}{2}\,\Re\!\bqty{\tfrac{\ii}{\sqrt 2}\!\pqty{-\tfrac{1}{\sqrt 2}}} = 0,
\end{align*}
so $\boldsymbol D^{(Q,P)} = \tfrac{\gamma_0(2\bar n+1)}{4}\,\id$. Restoring the dimensions via $D_{qq} = (\hbar/m\omega_0)\,D_{QQ}$, $D_{pp} = m\hbar\omega_0\,D_{PP}$, $D_{qp} = \hbar\,D_{QP}$ recovers the dimensionnal diffusion matrix~\eqref{eq:D0}.

\subsubsection{Green's function}
The Fokker--Planck equation~\eqref{eq:fpe} with linear drift~\eqref{eq:drift_photon} and constant diffusion~\eqref{eq:D0} admits a Gaussian Green's function. We consider the ansatz
\begin{align}\label{eq:gaussian_ansatz}
    G(\mathbf z, t\mid \mathbf z_0) = &\frac{1}{2\pi\sqrt{\det\mathbf C(t)}} 
  \exp\!\bqty{-\tfrac{1}{2}(\mathbf z - \bar{\mathbf z}(t))^T\,\mathbf C(t)^{-1}\,(\mathbf z - \bar{\mathbf z}(t))},
\end{align}
with $\bar{\mathbf z}(0) = \mathbf z_0$ and $\mathbf C(0) = \mathbf 0$. Substituting into~\eqref{eq:fpe} and matching powers of $(\mathbf z - \bar{\mathbf z})$ yields ODEs for the first two moments,
\begin{align}
    \dot{\bar{\mathbf z}} & = \mathbf A\bar{\mathbf z}, \label{eq:zbar_FP}                              \\
    \dot{\mathbf C}       & = \mathbf A\mathbf C + \mathbf C\mathbf A^T + 2\mathbf D, \label{eq:cov_FP}
\end{align}
The solution is given by
\begin{align}
    \bar{\mathbf z}(t) &= \ee^{\mathbf A t}\mathbf z_0\\
    \mathbf C(t) &= \int_0^t \ee^{\mathbf A s}(2\mathbf D)\ee^{\mathbf A^T s}\,\dd s \label{eq:cov_integral}
\end{align}
The drift matrix in \eqref{eq:drift_photon} decomposes as $\mathbf A = -\tfrac{\gamma_0}{2}\id + \omega_0 J$, with $J = \begin{pmatrix} 0 & 1/(m\omega_0) \\ -m\omega_0 & 0\end{pmatrix}$. Since $\id$ and $J$ commute, $\ee^{\mathbf A t} = \ee^{-\gamma_0 t/2}(\cos\omega_0 t\,\id + \sin\omega_0 t\,J) = \ee^{-\gamma_0 t/2}R(\omega_0 t)$, and we immediately arrive at~\eqref{eq:spiral_centre}.

A direct calculation gives $R(\phi)\,\boldsymbol D\,R(\phi)^T = \boldsymbol D$. Using this invariance, the covariance becomes
\begin{align}\label{eq:coC_{pp}hoton_appendix}
    \mathbf C(t) =& \ 2\boldsymbol D\int_0^t \ee^{-\gamma_0 s}\,\dd s\\
    =& \ \frac{2\boldsymbol D}{\gamma_0}\bigl(1 - \ee^{-\gamma_0 t}\bigr) \\
    =& \ \frac{\hbar(2\bar n + 1)}{2}\bigl(1 - \ee^{-\gamma_0 t}\bigr)\begin{pmatrix} 1/(m\omega_0) & 0 \\ 0 & m\omega_0 \end{pmatrix},
\end{align}
in agreement with~\eqref{eq:variance_photon}. The position marginal at time $t$ is Gaussian with mean $\bar q(t) = \ee^{-\gamma_0 t/2}[q_0\cos\omega_0 t + (p_0/m\omega_0)\sin\omega_0 t]$ and variance $\sigma_q^2(t) = \tfrac{\hbar(2\bar n + 1)}{2m\omega_0}(1 - \ee^{-\gamma_0 t})$.

\subsection{Dephasing}
\subsubsection{Phase space dynamics}
The dephasing jump operator $\hat L_\phi = \sqrt{\gamma_\phi}\,\hat n$ has Weyl symbol $L_\phi = \sqrt{\gamma_\phi}\bqty{\tfrac{1}{2}(Q^2 + P^2) - \tfrac{1}{2}}$, which is quadratic. The Wigner equation closes at finite order in $\hbar$ because $\hat L_\phi$ is Hermitian: the dissipator reduces to a nested commutator,
\begin{equation}\label{eq:dephasing_double_comm}
    \mathcal{D}[\hat n]\hat\rho = \hat n\hat\rho\hat n - \tfrac{1}{2}\acomm{\hat n^2}{\hat\rho} = -\tfrac{1}{2}[\hat n,[\hat n,\hat\rho]],
\end{equation}
so the full master equation involves only Moyal brackets with $\hat n$.

In the dimensionless convention $[Q,P]=\ii$, the number-operator Weyl symbol is
\begin{equation}\label{eq:number_symbol_dephasing}
    n(Q,P)=\frac{Q^2+P^2-1}{2}.
\end{equation}
The commutator maps as
\begin{equation}\label{eq:dephasing_commutator_map}
    [\hat n,\hat\rho]_W=\ii\MB{n}{W}=\ii\PB{n}{W},
\end{equation}
because $n$ is quadratic. Moreover,
\begin{equation}\label{eq:pb_n}
    \PB{n}{F}=Q\partial_PF-P\partial_QF=\partial_\theta F.
\end{equation}
Consequently the Hamiltonian and dephasing contributions are, respectively,
\begin{equation*}
    -\ii\omega_0[\hat n,\hat\rho]_W=\omega_0\partial_\theta W,
    \qquad
    -\frac{\gamma_\phi}{2}[\hat n,[\hat n,\hat\rho]]_W
    =\frac{\gamma_\phi}{2}\partial_\theta^2W.
\end{equation*}
Adding the two contributions reproduces the angular drift--diffusion equation~\eqref{eq:fpe_deph},
\begin{equation*}
    \partial_t W = \omega_0\,\partial_\theta W + \tfrac{\gamma_\phi}{2}\,\partial_\theta^2 W,
\end{equation*}
The radial profile is frozen because $\PB{n}{\,\cdot\,}$ depends only on $\theta$.

\subsubsection{Green's function}
Consider the Fourier modes $\ee^{\ii m\theta}$ ($m \in \mathbb Z$), which   eigenfunctions of both $\partial_\theta$ and $\partial_\theta^2$. Then,
\begin{equation*}
    \pqty{\omega_0\partial_\theta + \tfrac{\gamma_\phi}{2}\partial_\theta^2}\ee^{\ii m\theta} = \pqty{\ii m\omega_0 - \tfrac{\gamma_\phi m^2}{2}}\ee^{\ii m\theta},
\end{equation*}
The angular kernel is thus given by
\begin{equation}\label{eq:angular kernel}
    K(\theta, t\mid \theta_0) = \frac{1}{2\pi}\sum_{m \in \mathbb Z} \ee^{\ii m(\theta - \theta_0 + \omega_0 t)}\,\ee^{-\gamma_\phi m^2 t/2}.
\end{equation}
concentrated near $\theta_0 - \omega_0 t$ at small $t$ with angular variance $\gamma_\phi t$, relaxing to the uniform density $1/(2\pi)$ as $t\to\infty$. 

The two-dimensional Green's function is obtained by combining~\eqref{eq:angular kernel} with the frozen radial delta. Since the two-dimensional delta in polar coordinates carries a Jacobian $\delta^{(2)}(\mathbf z - \mathbf z_0) = \tfrac{1}{r_0}\delta(r - r_0)\delta(\theta - \theta_0)$, propagating only the angular part gives~\eqref{eq:green_dephasing},
\begin{equation*}
    G_\phi(r, \theta, t\mid r_0, \theta_0) = \frac{\delta(r - r_0)}{r_0}\cdot K(\theta, t\mid \theta_0).
\end{equation*}

\subsection{Caldeira-Leggett}
\subsubsection{Phase space dynamics}
The harmonic precession parts follow exactly the same as the thermal fluctuation case. In the following, we utilize two results about Moyal star product to calculate the remaining dissipative terms:
\begin{enumerate}
    \item $\hat q$ has linear Weyl symbol $q$, so $q\star W = qW + \tfrac{\ii\hbar}{2}\partial_p W$ and $W\star q = qW - \tfrac{\ii\hbar}{2}\partial_p W$ exactly (the star expansion terminates at first order). Hence
          \begin{equation}\label{eq:comm_q_W}
              [\hat q, \hat\rho] \longleftrightarrow \ii\hbar\,\partial_p W,
          \end{equation}
          and similarly $[\hat p, \hat\rho] \leftrightarrow -\ii\hbar\,\partial_q W$.
    \item $\hat p$ has linear Weyl symbol $p$. The same termination gives $p\star W = pW - \tfrac{\ii\hbar}{2}\partial_q W$ and $W\star p = pW + \tfrac{\ii\hbar}{2}\partial_q W$, so the anticommutator translates simply as
          \begin{equation*}
              \acomm{\hat p}{\hat\rho} \longleftrightarrow p\star W + W\star p = 2pW.
          \end{equation*}
\end{enumerate}

The friction term can be calculated by combining~\eqref{eq:comm_q_W} with $\acomm{\hat p}{\hat\rho} \leftrightarrow 2pW$,
\begin{equation*}
    -\tfrac{\ii\gamma}{2\hbar}\bqty{\hat q, \acomm{\hat p}{\hat\rho}} \longleftrightarrow -\tfrac{\ii\gamma}{2\hbar}\cdot\ii\hbar\,\partial_p(2pW) = \gamma\,\partial_p(pW).
\end{equation*}
For the diffusion terms, applying~\eqref{eq:comm_q_W} twice,
\begin{equation*}
    -\tfrac{m\gamma k_B T}{\hbar^2}\bqty{\hat q, [\hat q, \hat\rho]} \longleftrightarrow -\tfrac{m\gamma k_B T}{\hbar^2}\cdot(\ii\hbar)^2\,\partial_p^2 W = m\gamma k_B T\,\partial_p^2 W.
\end{equation*}
The factor $\hbar^2$ from the double commutator cancels the $\hbar^{-2}$ in the master equation exactly, so the resulting FPE is $\hbar$-independent. 

Summing the three contributions,
\begin{equation*}
    \partial_t W = -(p/m)\partial_q W + m\omega_0^2 q\,\partial_p W + \gamma\,\partial_p(pW) + m\gamma k_B T\,\partial_p^2 W,
\end{equation*}
which is exactly the classical Klein-Kramers equations and takes the form~\eqref{eq:fpe} with linear drift $\mathbf A\mathbf z = (p/m,\,-m\omega_0^2 q - \gamma p)^T$  and diffusion $D_{pp} = m\gamma k_B T$ (all other entries zero) matching~\eqref{eq:drift_cl}.

\subsubsection{Green's function}
The drift~\eqref{eq:drift_cl} has eigenvalues $\lambda_\pm = -\gamma/2 \pm \ii\Omega$ with $\Omega = \sqrt{\omega_0^2 - \gamma^2/4}$. A direct diagonalisation gives the matrix exponential
\begin{equation}\label{eq:expA_cl}
    \ee^{\mathbf A t} = \ee^{-\gamma t/2}\begin{pmatrix} \cos\Omega t + \tfrac{\gamma}{2\Omega}\sin\Omega t & \sin\Omega t/(m\Omega) \\[3pt] -m(\omega_0^2/\Omega)\sin\Omega t & \cos\Omega t - \tfrac{\gamma}{2\Omega}\sin\Omega t \end{pmatrix},
\end{equation}
whose first row applied to $(q_0,p_0)^T$ reproduces the position trajectory~\eqref{eq:mean_cl}.

The covariance matrix can be solved analytically via \eqref{eq:cov_integral}. We are only interested in the component $C_{qq} = (\sigma_{q}^{\rm CL})^2$ given by the integral
\begin{align}\label{eq:sigma_CL_closed}
    C_{qq} &= 2m\gamma k_B T \int_{0}^t \bqty{\ee^{\mathbf{A}s}\text{diag}(0,1)\ee^{\mathbf{A}^\top s}}_{qq} \dd s\\
    &= 2m\gamma k_B T \int_{0}^t \bqty{\pqty{\ee^{\mathbf{A}s}}_{qp}}^2 \dd s\\
    &= 2m\gamma k_B T \int_{0}^t \bqty{\ee^{-\gamma s/2}\frac{\sin\Omega s}{m\Omega}}^2 \dd s \\
    &= \frac{k_BT}{m \omega_0^2}\Bqty{1 - \ee^{-\gamma t}\bqty{1 + \frac{\gamma}{2\Omega}\sin(2\Omega t) + \frac{\gamma^2}{2\Omega^2}\sin^2(\Omega t) }}
\end{align}

\section{Behavior of thermal classical bounds in  small and large noise regimes}\label{app:limits}
As mentioned in the main text,  the classical bound $\mathbf{P}_\text{cl}(\gamma)$ acquires a discontinuous jump at $\gamma \to 0^+$ in both thermal models and saturates at $5/6$ as $\gamma \to \infty$ for the thermal relaxation model. We derive those results here.

\subsection{Thermal relaxation}
For $\gamma_0 > 0$, the marginal at $t_0 = 0$ is a hard indicator ($\sigma_q(0) = 0$, so $\Phi$ collapses to the Heaviside function $\Theta$ at $t_0$). Meanwhile, due to diffusion, $\sigma_q(t_1, t_2) >0$. Differentiating along $Q_0$ at a fixed $P_0$, we get
\begin{align}\label{eq:dPdq0_photon_app}
    \frac{\partial  S}{\partial Q_0} <0
\end{align}
so the score function is strictly decreasing along $\{Q_0>0\}$ at a fixed $P_0$, and the half-line $\{Q_0 <0\}$ contribute at most $2/3$ since $\Theta(Q_0) = 0$ there. The supremum over all Dirac functions $\delta(\zz_0)$ is consequently attained as a one-sided limit $\zz_0 \to (0^+, P_0)$. On this boundary, 
\begin{equation}\label{eq:boundary_sup_photon_app}
  \mathbf{P}_\text{cl}(0^+, P_0;\gamma_0,\bar n) = \tfrac{1}{3}\bigl[\,1 + \Phi(a_1 P_0) + \Phi(-a_2 P_0)\,\bigr],
\end{equation}
with
\begin{equation}\label{eq:def_ak_photon_app}
  a_k(\gamma_0,\bar n) := \frac{\sqrt 3}{2}\,\frac{\ee^{-\gamma_0 t_k/2}}{\sqrt{(\bar n+\tfrac{1}{2})(1-\ee^{-\gamma_0 t_k})}},
\end{equation}
for $k = 1, 2$.
Setting $\partial_{P_0} \mathbf{P}_{\rm cl} = 0$ in~\eqref{eq:boundary_sup_photon_app} gives $a_1 \varphi(a_1 P_0) = a_2 \varphi(a_2 P_0)$; taking logarithms,
\begin{equation}\label{eq:p0_star_photon_app}
  P_0^\star(\gamma_0, \bar n)^2 = \frac{2\,\ln(a_1/a_2)}{a_1^2 - a_2^2},
\end{equation}
which is finite and positive for every $\gamma_0 > 0$. The second derivative $\partial_{P_0}^2 S\big|_{P_0 = P_0^\star} = -\tfrac{1}{3}\bigl[a_1^3 P_0^\star \varphi(a_1 P_0^\star) - a_2^3 P_0^\star \varphi(-a_2 P_0^\star)\bigr]$ is strictly negative at the critical point (since $a_1 > a_2$ and $\varphi$ is positive), so $P_0^\star$ is a maxima. Substituting back,
\begin{equation}\label{eq:Pcl_photon_closed_app}
  \mathbf{P}_\text{cl}(\gamma_0,\bar n) = \tfrac{1}{3}\Bigl[\,1 + \Phi(a_1 P_0^\star) + \Phi(-a_2 P_0^\star)\,\Bigr].
\end{equation}
At $\gamma_0 \to 0^+$, expanding $\ee^{-\gamma_0t/2} = 1- \gamma_0 t/2 + \mc{O}(\gamma_0^2)$ and $\sigma_q^2(t) = \pqty{\bar{n} + \frac{1}{2}}\gamma_0 t +\mc{O}(\gamma_0^2)$, we get
\begin{equation}
    a_k = \frac{\sqrt{3}/2}{\sqrt{\pqty{\bar{n} + \frac{1}{2}}\gamma_0 t_k}}[1+\mc{O}(\gamma_0)].
\end{equation}
The common factor $\sqrt{\pqty{\bar{n} + \frac{1}{2}}\gamma_0}$ cancels in the ratio $a_1/a_2 = \sqrt{t_2/t_1} = \sqrt{2}$. From~\eqref{eq:p0_star_photon_app},
\begin{equation}\label{eq:p0_star_limit_photon_app}
  \bigl[P_0^\star\bigr]^2 \xrightarrow{\gamma_0 \to 0^+} \frac{8\,\ln 2}{3}\,(\bar n + \tfrac{1}{2})\,\gamma_0 t_1 = \frac{16\pi\ln 2}{9}\,(\bar n + \tfrac{1}{2})\,\frac{\gamma_0}{\omega_0},
\end{equation}
so $P_0^\star \propto \sqrt{\pqty{\bar n + \tfrac{1}{2}}\frac{\gamma_0}{\omega_0}}$ collapses to the origin while the $\Phi$-arguments stay $\mathcal O(1)$:
\begin{equation}\label{eq:phi_args_photon_app}
  a_1 P_0^\star \to \sqrt{2\ln 2},\qquad a_2 P_0^\star \to \sqrt{\ln 2}.
\end{equation}  
The limiting value recovers the result in~\eqref{eq:Pcl_jump_photon} of the main text. 

In the limit $\gamma_0 \to \infty$, $\bar{q}_k \to 0$, $\sigma_q^2(t_k) \to \bar{n}+\frac{1}{2}$, and $a_k \to (\sqrt{3}/2)\ee^{-\gamma_0 t_k/2}/\sqrt{\bar{n}+\frac{1}{2}}$. The ratio $a_1/a_2 = \ee^{\gamma_0 T_0/6}$ diverges exponentially, so $a_1^2 - a_2^2 \sim a_1^2$. Eq.~\eqref{eq:p0_star_photon_app} then gives $P_0^\star \sim \sqrt{\pqty{\bar{n}+\frac{1}{2}}\gamma_0/\omega_0} \ee^{\gamma_0 T/6}$, and the $\Phi$-arguments then approach
\begin{equation}\label{eq:phi_args_inf_photon_app}
  a_1 P_0^\star \to \sqrt{\gamma_0 T_0/3} \to \infty,\qquad a_2 P_0^\star = a_1 P_0^\star\,\ee^{-\gamma_0 T_0/6} \to 0.
\end{equation}
Hence $\Phi(a_1 P_0^\star) \to 1$ and $\Phi(-a_2 P_0^\star) \to 1/2$, giving the $5/6$ saturation mentioned in the main text.

\subsection{Caldeira-Leggett}
The same boundary reduction applies: on the line $q_0 = 0^+$, the noiseless mean reads $\bar q^{\,\text{CL}}(t_k)\bigl|_{q_0=0} = \ee^{-\gamma t_k/2}\sin(\Omega t_k)\,p_0/(m\Omega)$. Defining
\begin{equation}\label{eq:def_ak_cl_app}
  a_k^{\,\text{CL}}(\gamma, T) := \frac{\ee^{-\gamma t_k/2}\,\abs{\sin(\Omega t_k)}}{m\Omega\,\sigma_q^{\,\text{CL}}(t_k)},\qquad k = 1, 2,
\end{equation}
the boundary score takes the same universal form~\eqref{eq:boundary_sup_photon_app} with $a_k \to a_k^{\,\text{CL}}$, and the closed-form maximiser is given by~\eqref{eq:p0_star_photon_app} with the same substitution.

Expanding the closed-form variance \eqref{eq:sigma_CL_closed} to the third order in $\gamma$ gives
\begin{align}\label{eq:sigma_CL_asymptote}
    \pqty{\sigma_{q}^{\rm CL}}^2 =& \frac{\gamma k_B T}{m\omega_0^3}\bqty{\tau(t) -\frac{\gamma}{\omega_0}h(t) + \mc{O}(\gamma^2)}
\end{align}
with (writing $\theta = \omega_0 t$) 
\begin{align}
    \tau &:= \theta -\frac{1}{2}\sin 2\theta\\
    h &:= \frac{1}{2}\pqty{\theta^2 -\theta \sin 2\theta+\sin^2\theta}
\end{align}
At probing times $t_k$:
\begin{align}
    \tau_1 = \frac{2\pi}{3}+\frac{\sqrt 3}{4}, \ \tau_2 = \frac{4\pi}{3} - \frac{\sqrt 3}{4}
\end{align}

Similar to the calculation for the thermal relaxation model, we find that
\begin{align}
    \frac{[\sigma_{q}^{\rm CL}(t_2)]^2}{[\sigma_{q}^{\rm CL}(t_1)]^2} \xrightarrow[]{\gamma\to 0^+} \frac{\tau_2}{\tau_1} \approx 1.486
\end{align}
which determines the $\Phi$-arguments in \eqref{eq:boundary_sup_photon_app}
\begin{align*}
    a_1p_0^\star \to \zeta_1 \approx 1.100, \ a_2 p_0^\star \to \zeta_2 \approx 0.903
\end{align*}
and reproduces the post-jump value \eqref{eq:Pcl_jump_CL} in the main text.

\section{Derivation of Wigner positive bound for thermal channels}\label{app:wp_thermal}
We use dimensionless variables $\ZZ_0 = (Q_0, P_0)$ under which the relaxed feasible set becomes $0\leq f(\ZZ)\leq \frac{1}{\pi}$ with $\int f(\ZZ) \dd \ZZ = 1$. 

The thermal relaxation and Caldeira–Leggett channels share the bathtub construction. For both, the Dirac trajectory score has the common form
\begin{align}
    S(\mathbf{Z}_0) = \frac{1}{3}\bqty{\Theta(Q_0) + \Phi(u_1)+ \Phi(u_2)}
\end{align}
where the $\Phi$-arguments are linear functions in $Q_0$ and $P_0$
\begin{align}
    u_k = a_k Q_0 + b_kP_0
\end{align}

For a threshold $\lambda$, rewriting the score and using the optimization constraints gives  
\begin{align}\label{eq:bathtub_lambda}
    \int S(\ZZ_0) f(\ZZ_0) \dd^2\ZZ_0  =& \ \lambda + \int [S(\ZZ_0) - \lambda] f(\ZZ_0) \dd^2\ZZ_0\\
    \leq & \  \lambda + \frac{1}{\pi} \int (S-\lambda)_+\dd^2\ZZ_0 =: B(\lambda) 
\end{align}
In the above expression and throughout this appendix, for notational convenience, we introduce the subscript $+$ to denote the integration over the set where the integrand is positive, that is,
\begin{align*}
  \int [f(x)]_+ \dd x := \int_{-\infty}^{\infty} \max(0, f(x)) \dd x 
\end{align*}

So, $B(\lambda)$ upper bounds $\mathbf{P}_{\rm wp}$ for every $\lambda$. The bound is saturated when $f$ matches the integrand pointwise: $f = 1/\pi$ when $S>\lambda$, $f=0$ where $S<\lambda$, and $f$ unconstrained in $[0, 1/\pi]$ where $S = \lambda$ -- the integrand vanishes there regardless of $f$. The maximizer is therefore given by 
\begin{align}\label{eq:bathtub_area}
    f^\star := \frac{1}{\pi}\bqty{\mathbbm{1}_{\{s>\lambda^\star\}} + c\mathbbm{1}_{\{S=\lambda^\star\}}}
\end{align}
with $|\{S>\lambda^\star\}|\leq \pi \leq |\{S\geq\lambda^\star\}|$, which is \eqref{eq:bathtub_optimizer} in the main text, and 
\begin{align}
    \mathbf{P}_{\rm wp} = \mc{S}[f^\star] = B(\lambda^\star) = \inf_{\lambda} B(\lambda)
\end{align}

Locating the optimal $\lambda^\star$ require solving the area condition \eqref{eq:bathtub_area}, which does not admit an analytic solution. We present here a bound for $\lambda^\star$ that is particularly useful for understanding the behavior of the bathtub bound in small-$\gamma_0$ regime. 

In the noiseless case, the score is the hard three-sector indicator of Tsirelson's construction, equal to $2/3$ on the sectors where two of the three half-planes overlap. The noise smooths the $t_{1,2}$ marginals through $\sigma_q(t_{1,2})>0$ but leaves $S\to 2/3^+$ as $|\ZZ_0| \to \infty $, so $|\{S>\lambda \}| = \infty$ for $\lambda \leq 2/3$. Therefore, $\lambda^\star >2/3$ and we can further show in the following that it lies exponentially close to $2/3$ from above (\zcref{eq:lambda_star_bound}), and $\mathbf{P}_{\rm wp}$ is exponentially close to $B(2/3)$ as a result (\zcref{eq:bathtub_bound_bound}). 

Let $\lambda = 2/3 + \epsilon$. We start by characterizing the set $\{S>\lambda\}$. For $Q_0< 0$, $\Theta(Q_0) = 0$, and so $S<2/3$ trivially here. For $Q_0 = 0$, $S = \frac{1}{3}\bqty{ \frac{1}{2} + \Phi(u_1)+\Phi(u_2)} <2/3$. We are left with $Q_0>0$, which gives $\Theta(Q_0) = 1$, and $u_2<-u_1/\kappa$ for $\kappa := -\frac{b_1}{b_2} > 1$. The excess set can thus be parametrized in $u_1$ and $u_2$ as
\begin{align}\label{eq:excess_set_parametrization}
    \{S> 2/3 + \epsilon\} = \{-\Phi^{-1}(\Phi(u_1) - 3 \epsilon) \leq u_2 \leq -u_1/\kappa\}
\end{align}
with area
\begin{align}\label{eq:excess_set_area_bound}
    |\{S> 2/3 + \epsilon\}| =& \ \mc{J}\int_{0}^{\infty} \pqty{ \Phi^{-1}(\Phi(u) - 3 \epsilon) -\frac{u}{\kappa}}_+
    \dd u \\
     \leq & \ \mc{J}\int_{0}^{\infty} \bqty{\min(u, \Phi^{-1}(1-3\epsilon)) - \frac{u}{\kappa}}_+ \dd u \\
    =& \ \mc{J}\bigg[\int_{0}^{\Phi^{-1}(1-3\epsilon)} u\pqty{1 - \frac{1}{\kappa}}\dd u +  \int_{\Phi^{-1}(1-3\epsilon)}^{\kappa\Phi^{-1}(1-3\epsilon)} \pqty{\Phi^{-1}(1-3\epsilon) - \frac{u}{\kappa}} \dd u \bigg ]\\
    =& \ \frac{\mc{J}}{2}(\kappa - 1)[\Phi^{-1}(1-3\epsilon)]^2\\
    \leq &  \ \mc{J}(\kappa - 1)\ln(1/6\epsilon)
\end{align}
Here, $\mc{J} := \frac{1}{a_1 b_2 - a_2 b_1}$ is the Jacobian in the change of variables $(Q_0, P_0) \mapsto (u_1, u_2)$. 
In the above derivation, we have used the fact that $\Phi^{-1}(\Phi(u_1) - 3 \epsilon)$ is upper bounded by both $u$ and $\Phi^{-1}(1-3\epsilon)$ in the first inequality and the Chernoff bound $1-\Phi(x) \leq \frac{1}{2}\ee^{-x^2/2}$ for the quantile in the last inequality.

Let $\lambda^\star = 2/3 + \epsilon^\star$, the optimality condition $|\{S> \lambda^\star\}| = \pi$ and \eqref{eq:excess_set_area_bound} give
\begin{align}\label{eq:lambda_star_bound}
    \frac{2}{3}< \lambda^\star \leq \frac{2}{3} + \frac{1}{6}\ee^{-M}, \ M:= \frac{\pi}{\mc{J}(\kappa - 1)}
\end{align}
so $\lambda^\star$ lies just above $2/3$ by an amount exponentially small in $M$, which scales as $\propto \bqty{\pqty{\bar n +1/2}\gamma_0/\omega_0}^{-1}$ under thermal relaxation \eqref{eq:thermal_wp} and $\bqty{\frac{k_B T}{\hbar \omega_0}\frac{\gamma}{\omega_0}}^{-1}$ under Caldeira-Leggett \eqref{eq:CL_wp}.  

From \eqref{eq:bathtub_lambda}, the gap between $B(2/3)$ and the exact bathtub bound $\mathbf{P}_{\rm wp}$ can be found by the integral 
\begin{align}\label{eq:}
    B(2/3) - B(\lambda^\star) = \int_{\lambda^\star}^{2/3} \frac{\dd B}{\dd \lambda} \dd \lambda =  \int_{2/3}^{\lambda^\star} \pqty{\frac{1}{\pi}|\{s>\lambda\}| - 1} \dd \lambda
\end{align}

Substituting the results in \eqref{eq:excess_set_area_bound} and \eqref{eq:lambda_star_bound} yields the bound 
\begin{align}\label{eq:bathtub_bound_bound}
    B(2/3) - \frac{1}{6}\pqty{1+\frac{1}{M}}\ee^{-M} \leq B(\lambda^\star) \leq B(2/3)
\end{align}
which is, again, exponentially close in $\bqty{\pqty{\pqty{\bar n +\frac{1}{2}}}\frac{\gamma_0}{\omega_0}}^{-1}$ to $B(2/3)$, and admits a closed-form expression. The parametrization of the set $\{s>2/3\}$ in \eqref{eq:excess_set_parametrization} turns the bound at this level to an explicit integral
\begin{align}\label{eq:B(2/3)_def}
    B(2/3) &= \frac{2}{3} + \frac{1}{\pi} \int(S-2/3)_+\dd^2\ZZ_0\\
    &= \frac{2}{3}+\frac{\mc{J}}{\pi} \int_{0}^\infty \int_{-u_1}^{-u_1/\kappa} \frac{1}{3} \bqty{\Phi(u_1)- \Phi(-u_2) }  \dd u_2 \dd u_1 \\
    &= \frac{2}{3} + \frac{\mc{J}(\kappa - 1)^2}{12\pi\kappa}
\end{align}

As $\pqty{\bar n + \frac{1}{2}}\frac{\gamma_0}{\omega_0}$ grows, the lower limit in \eqref{eq:bathtub_bound_bound} slackens and $B(2/3)$ is no longer a tight bound. The threshold is then calculated directly rather than through the surrogate bound $B(2/3)$. We derive here the exact integral for the optimal bathtub bound $\mathbf{P}_{\rm wp} = B(\lambda^\star)$ and detail the numerical search for the exact bound. 

The cross section of $\{S>\lambda \}$ at height $u$ runs from the wedge edge $u/\kappa$ up to the level set $g(u):= \Phi^{-1}(\Phi(u) - 3\epsilon)$, so with $u_\pm$ the two roots of $h(u):=\Phi(u)-\Phi(u/\kappa) = 3\epsilon = 3\lambda - 2$, 

\begin{align}\label{eq:area_quad_app}
  |\{S>\lambda\}| = \mathcal J\!\int_{u_-}^{u_+}\!\bigl[g(u)-u/\kappa\bigr]\,\dd u, \quad
  \int (S-\lambda)_+\,\dd^2\mathbf Z_0 = \frac{\mathcal J}{3}\!\int_{u_-}^{u_+}\!\Bigl[(\Phi(u)-3\epsilon)\bigl(g(u)-u/\kappa\bigr)  - G[g(u)] + G(u/\kappa)\Bigr]\dd u . 
\end{align}
where $G(x):= \int \Phi(x) \dd x = x\Phi(x) +\phi(x)$, and $\phi(x) =\frac{\ee^{-x^2/2}}{\sqrt{2\pi}}$.

The roots $u_\pm$ and the first integral in \eqref{eq:area_quad_app} can be evaluated for a given $\lambda$, and $\lambda^\star$ is found by imposing the area condition $|\{S>\lambda^\star\}| = \pi$. The second integral of \eqref{eq:area_quad_app} evaluated at $\lambda^\star$ then returns the Wigner positivity bound $\mathbf{P}_{\rm wp} = B(\lambda^\star)$. 

\subsection{Thermal relaxation}
For the case of thermal relaxation,
\begin{align*}
    a_k &= \frac{\cos(2\pi k/3)}{\sqrt{\pqty{\bar n +1/2}\pqty{\ee^{\gamma_0 t_k} - 1}}}\\
    b_k &= \frac{\sin(2\pi k/3)}{\sqrt{\pqty{\bar n +1/2}\pqty{\ee^{\gamma_0 t_k} - 1}}}
\end{align*}
and,
\begin{align}\label{eq:thermal_wp}
    \kappa^{\rm th} &= -\frac{b_1}{b_2} =  \sqrt{\frac{\ee^{\gamma_0 t_2} - 1}{\ee^{\gamma_0 t_1} - 1}} \xrightarrow{\gamma_0 \to 0^+} \sqrt{t_2/t_1} = \sqrt{2}\\
    \mc{J}^{\rm th} &= \frac{2}{\sqrt{3}}\pqty{\bar n + 1/2}\sqrt{\pqty{\ee^{\gamma_0 t_1} - 1}\pqty{\ee^{\gamma_0 t_2} - 1}}\xrightarrow{\gamma_0 \to 0^+} \frac{2}{\sqrt{3}}\pqty{\bar n + 1/2}\gamma_0\sqrt{t_1 t_2}\\
    M^{\rm th} &= \frac{\pi}{\mc{J}(\kappa - 1)} \xrightarrow{\gamma_0 \to 0^+} \underbrace{\frac{3\sqrt{3}}{4\sqrt{2}(\sqrt{2} - 1)}}_{2.22}\bqty{\pqty{\bar n + \frac{1}{2}}\frac{\gamma_0}{\omega_0}}^{-1}
\end{align}
Substituting these values to \eqref{eq:B(2/3)_def} and invoking \eqref{eq:bathtub_bound_bound}, we arrive at equation \eqref{eq:photon_bathtub_slope} in the main text for small-$\gamma_0$ scaling of the bathtub bound 
\begin{align}
    \mathbf{P}_{\rm wp}(\gamma_0, \bar n) =& \ \frac{2}{3} + \frac{\bar n + \frac{1}{2}}{6\sqrt{3}\pi}\pqty{\sqrt{\ee^{\gamma_0 t_2} - 1} - \sqrt{\ee^{\gamma_0 t_1} - 1}}^2  - \mc{O}(\ee^{-M^{\rm th}})\\
    =& \ \frac{2}{3} + \frac{3-2\sqrt{2}}{9\sqrt{3}}\pqty{\bar n +\frac{1}{2}}\frac{\gamma_0}{\omega_0} + \mc{O}(\gamma_0^2)
\end{align}

\subsection{Caldeira-Leggett}
For the Caldeira-Leggett model,
\begin{align*}
    a_k&=\frac{\ee^{-\gamma t_k/2}}{\sigma_q^{\,\text{CL}}(t_k)}\Bigl(\cos\Omega t_k+\tfrac{\gamma}{2\Omega}\sin\Omega t_k\Bigr),\\
  b_k&=\frac{\ee^{-\gamma t_k/2}}{m\Omega\,\sigma_q^{\,\text{CL}}(t_k)}\sin\Omega t_k,
\end{align*}

Using the asymptotes in \eqref{eq:sigma_CL_asymptote},
\begin{align}\label{eq:CL_wp}
    \kappa^{\rm CL} &\xrightarrow{\gamma \to 0^+} 1.219\\
    \mc{J}^{\rm CL} &\xrightarrow{\gamma \to 0^+} \underbrace{\frac{2}{\sqrt{3}}\sqrt{\tau_1 \tau_2}}_{3.558}\frac{k_B T}{\hbar \omega_0} 
    \frac{\gamma}{\omega_0}\\
    M^{\rm CL} &\xrightarrow{\gamma \to 0^+} 4.03 \bqty{\frac{k_B T}{\hbar \omega_0} \frac{\gamma}{\omega_0}}^{-1}
\end{align}
Similar to the thermal relaxation case derived above, with these values of $\kappa^{\rm CL}$ and $\mc{J}^{\rm CL}$, we arrive at
\begin{align}
    \mathbf{P}_{\rm wp}^{\rm CL}(\gamma, T) = & \ \frac{2}{3} + \frac{(\sqrt{\tau_2} - \sqrt{\tau_1})^2}{6\sqrt{3} \pi} \frac{k_B T}{\hbar \omega_0}\frac{\gamma}{\omega_0}   - \mc{O}(\ee^{-M^{\rm CL}})\\
    =& \ \frac{2}{3} + 3.71\times 10^{-3}\frac{k_B T}{\hbar \omega_0}\frac{\gamma}{\omega_0}  + \mc{O}(\gamma^2)
\end{align}

\section{Robustness of the protocol under miscalibrated frequency}\label{app:freq_shift}
The protocol's probing times $t_k := \frac{kT_0}{3}$ we considered in this paper are set by a nominal frequency $\omega_0$, which can be off in practice. For example, coupling to the environment might renormalize the frequency at which
the state actually precesses: in the microscopic derivation of \eqref{eq:lindblad_photon}, the principal-value part of the bath integral shifts the oscillator frequency to $\omega = (1+\delta)\omega_0$, which is an analog of the Lamb shift \cite{carmichael1998}. If the frequency is locked to a nominal $\omega_0$ while actually precessing at $\omega$, the protocol probes the dilated phases
\begin{align}
    \theta_k = \omega t_k = \frac{2\pi k}{3}(1+\delta), \quad k\in \{0, 1, 2\}
\end{align}
so the first probe at $k=0$ is exact while the latter are off by $2k\pi \delta/3$. The same dilation describes any miscalibration of frequency, regardless of its origin, whereas it actually precesses at $(1+\delta)\omega_0$, or one can only guarantee that its frequency lies in that band. Either way, the certification threshold must be recomputed for the true, uncertain dynamics. This appendix examines what a bounded frequency uncertainty does to the certification. Formally, a bounded frequency uncertainty replaces the dynamical assumption (A2) of \zcref{sec:protocol} by the weaker

(A2') The dynamics is the one in (A2) with a precession frequency $(1+\delta)\omega_0$ for some unknown $\delta$ with $|\delta| \leq \delta_{\max}$

Under this updated assumption, the valid classical threshold is the worst case over the admissible frequencies
\begin{align}
    \mathbf{P}(\gamma_0, \delta_{\rm max}) := \sup_{|\delta|\leq \delta_{\rm max}} \mathbf{P}(\gamma_0, \delta)
\end{align}
The violation of this bound falsifies any classical model in this range. The Wigner positivity threshold is treated identically. \zcref[S]{fig:mismatch_strip} shows both updated thresholds for the thermal relaxation model at $\bar n = 0$, 
  
\begin{figure}
    \centering  \includegraphics[width=\linewidth]{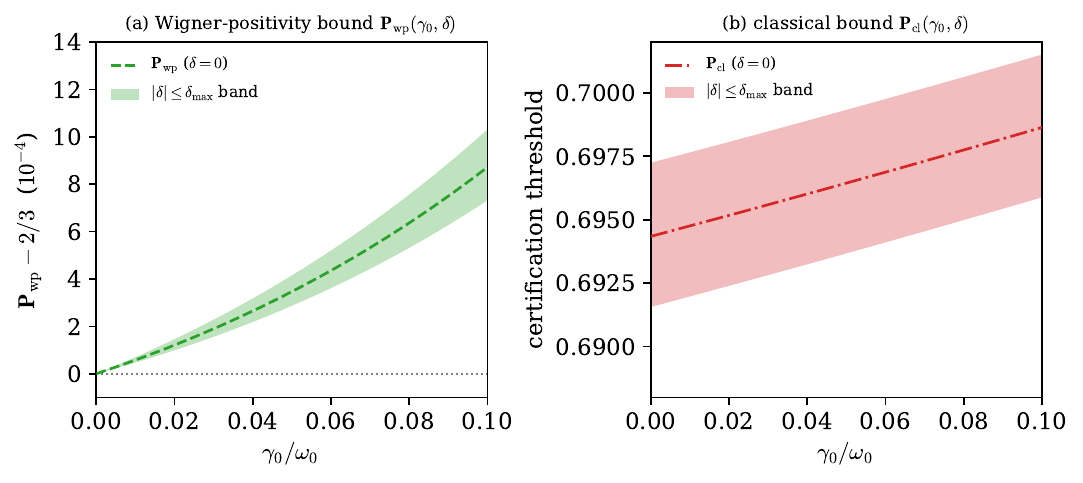}
    \caption{\justifying Certification thresholds under the frequency promise $|\delta| \leq \delta_{\max} = 0.01$ for the thermal relaxation channel at $\bar n = 0$. The centred dashed curve corresponds to the threshold at $\delta = 0$, which is surrounded by the band it sweeps over $|\delta| \leq \delta_{\max}$.}
    \label{fig:mismatch_strip}
\end{figure}
The probing time is a degree of freedom we do not exploit in this paper; we only consider the times $t_k = kT_0/3$ inherited from the noiseless protocol. Optimizing the schedule jointly against the calibrated or worst-case noisy dynamics can only widen the certifiable regions reported here, and we leave it open.

\end{widetext}
\end{document}